\newcommand{\inner}[1]{\left\langle#1\right\rangle}
\def\DB{{\small DB}}
\def\Th{{\small T}}
\def\DM{{\small DM}}
\def\AI{{\small AI}}
\def\density{\mathrm{density}}
\def\dist{\mathrm{dist}}
\def\T{\mathcal{T}}
\def\A{\mathcal{A}}
\def\R{\mathbb{R}}
\newcommand{\norm}[1]{\left\|#1\right\|}
\newcommand{\abs}[1]{\left|#1\right|}
\def\ones{\mathbf{1}}
\def\assoc{\mathrm{assoc}}
\def\cut{\mathrm{cut}}
\def\sgn{\mathop{\rm sgn}\nolimits}
\def\vol{\mathop{\rm vol}\nolimits}
\def\argmin{\mathop{\rm arg\,min}\limits}%    a math operator.
\def\subj{\mathop{\rm subject\,to}}
\def\ones{\mathbf{1}}
\def\cut{\mathrm{cut}}
\newtheorem{theorem}{Theorem}
\newtheorem{lemma}{Lemma}
\newtheorem{definition}{Definition}
\newtheorem{corollary}{Corollary}
\def\pen{\mathop{\rm pen}\nolimits}
\def\spvol{\mathop{\rm spvol}\nolimits}
\def\unit{\mathop{\rm unit}\nolimits}
\def\Vc{V}
\def\FORTE{\textbf{FORTE}}
\def\LPFeas{\textbf{LPfeas}}
\def\mdalk{\textbf{mdAlk}}
\def\lpfeas{\textbf{LPfeas}}
\def\Vp{V^\prime}
\def\Gp{G^\prime}
\newenvironment{small-lar}[1]{\skip@=\baselineskip#1 \baselineskip=\skip@\lbegar }{\rendar \ignorespacesafterend}
\newenvironment{small-ar}[1]{\skip@=\baselineskip#1 \baselineskip=\skip@\begar }{\endar \ignorespacesafterend}
\renewcommand{\theenumi}{\alph{enumi}}
\begin{document}
%
% --- Author Metadata here ---
%\conferenceinfo{WWW}{'13 Rio de Janeiro, Brazil}
%\CopyrightYear{2007} % Allows default copyright year (20XX) to be over-ridden - IF NEED BE.
%\crdata{0-12345-67-8/90/01}  % Allows default copyright data (0-89791-88-6/97/05) to be over-ridden - IF NEED BE.
% --- End of Author Metadata ---

\title{Towards Realistic Team Formation in Social Networks based on Densest Subgraphs}
%\subtitle{[Extended Abstract]}
%
% You need the command \numberofauthors to handle the 'placement
% and alignment' of the authors beneath the title.
%
% For aesthetic reasons, we recommend 'three authors at a time'
% i.e. three 'name/affiliation blocks' be placed beneath the title.
%
% NOTE: You are NOT restricted in how many 'rows' of
% "name/affiliations" may appear. We just ask that you restrict
% the number of 'columns' to three.
%
% Because of the available 'opening page real-estate'
% we ask you to refrain from putting more than six authors
% (two rows with three columns) beneath the article title.
% More than six makes the first-page appear very cluttered indeed.
%
% Use the \alignauthor commands to handle the names
% and affiliations for an 'aesthetic maximum' of six authors.
% Add names, affiliations, addresses for
% the seventh etc. author(s) as the argument for the
% \additionalauthors command.
% These 'additional authors' will be output/set for you
% without further effort on your part as the last section in
% the body of your article BEFORE References or any Appendices.

\numberofauthors{3} %  in this sample file, there are a *total*
% of EIGHT authors. SIX appear on the 'first-page' (for formatting
% reasons) and the remaining two appear in the \additionalauthors section.
%
\author{
% You can go ahead and credit any number of authors here,
% e.g. one 'row of three' or two rows (consisting of one row of three
% and a second row of one, two or three).
%
% The command \alignauthor (no curly braces needed) should
% precede each author name, affiliation/snail-mail address and
% e-mail address. Additionally, tag each line of
% affiliation/address with \affaddr, and tag the
% e-mail address with \email.
%
% 1st. author
\alignauthor
Syama  Rangapuram\\
       \affaddr{Max Planck Institute for Computer Science}\\
%       \affaddr{Saarland University}\\
       \affaddr{Saarbr\"ucken, Germany}\\
       \email{srangapu@mpi-inf.mpg.de}
% 2nd. author
\alignauthor
Thomas B\"uhler\\
       \affaddr{Faculty of Mathematics\\ and Computer Science}\\
       \affaddr{Saarland University}\\
       \affaddr{Saarbr\"ucken, Germany}\\
       \email{tb@cs.uni-saarland.de}
% 3rd. author
\and
\alignauthor 
Matthias Hein\\
       \affaddr{Faculty of Mathematics\\ and Computer Science}\\
       \affaddr{Saarland University}\\
       \affaddr{Saarbr\"ucken, Germany}\\
       \email{hein@cs.uni-saarland.de}
}
% There's nothing stopping you putting the seventh, eighth, etc.
% author on the opening page (as the 'third row') but we ask,
% for aesthetic reasons that you place these 'additional authors'
% in the \additional authors block, viz.
%\additionalauthors{Additional authors: John Smith (The Th{\o}rv{\"a}ld Group,
%email: {\texttt{jsmith@affiliation.org}}) and Julius P.~Kumquat
%(The Kumquat Consortium, email: {\texttt{jpkumquat@consortium.net}}).}
%\date{30 July 1999}
% Just remember to make sure that the TOTAL number of authors
% is the number that will appear on the first page PLUS the
% number that will appear in the \additionalauthors section.

\maketitle
\begin{abstract}
Given a task $\T$, 
 a set of experts $V$ with multiple skills and a social network $G(V, W)$ 
 reflecting the compatibility 
 among the experts, {\it team formation} is the problem of 
 identifying a team $C \subseteq V$
  that is both competent in performing the task $\T$ and compatible in working together.
Existing methods for this problem make too restrictive assumptions and thus
cannot model practical scenarios. 
 The goal of this paper is to consider the team formation problem in a realistic setting
 and  present a novel formulation based on densest subgraphs. 
 Our formulation allows modeling of many natural requirements such as
  (i) inclusion of  a designated team leader
 and/or  a group of given experts, 
 (ii) restriction of the size or more generally cost of the team
 (iii) enforcing \textit{locality} of the team, e.g., in a geographical sense or social sense, etc.
% (iii) mutual exclusion of pairs of experts (e.g., because of their personalities or geographical location), etc. 
 The proposed formulation leads to a generalized version of the classical densest subgraph problem with cardinality constraints (DSP), 
 which is an NP hard problem and has many applications in social network analysis. 
% In this paper, we first show that there exists an LP (Linear Program) relaxation of the resulting 
% generalized densest subgraph problem (GDSP, for short). 
% The LP relaxation, which can solved optimally, provides an upper bound on the optimum of GDSP.
 In this paper, we present a new method for (approximately) solving the generalized DSP (GDSP). 
 Our method, \FORTE, is based on solving an \textit{equivalent} continuous relaxation of GDSP.
% In this paper, we show that the generalized DSP (GDSP) can be {\it equivalently} rewritten as a {\it continuous} optimization problem 
% and present an efficient algorithm for its (approximate) solution. 
 The solution found by our method has a quality guarantee and always satisfies the constraints of GDSP.
 Experiments show that the proposed formulation (GDSP) is useful in  modeling a broader range of
% In the experiments we show that the proposed method is not only useful in modeling a broader range of
 team formation problems and that our method produces more coherent and compact teams of high quality. 
 %but also provides more coherent and compact teams. 
 We also show, with the help of an LP relaxation of GDSP, that
 our method gives close to optimal solutions to GDSP.
\end{abstract}

% A category with the (minimum) three required fields
\category{H.2.8}{Database Management}{Database Applications}[Data mining]
%A category including the fourth, optional field follows...
\category{G.2.2}{Discrete Mathematics}{Graph Theory}[Graph algorithms]

\terms{Algorithms, Experimentation, Theory}

\keywords{Team formation, Social networks, Densest subgraphs}

\section{Introduction}
% Team Formation has has attracted recent interest in data mining community \cite{LapLiuTer09, KarAn11,  AnaBecCasGioLeo12, GajSar12}. 
  Given a set of skill requirements (called task $\T$), 
 a set of experts who have expertise in one or more skill, along with 
 a social or professional network of the experts, 
 the team formation problem is to identify a competent and highly collaborative team.
% who are not only competent in finishing $\T$ but also communicate effectively and work as a team. 
 This problem in the context of a social network was first introduced by \cite{LapLiuTer09} and 
  has attracted recent interest in the data mining community \cite{KarAn11,  AnaBecCasGioLeo12, GajSar12}. 
%  
%% who not only meet the skill requirement of $\T$, but also 
% team formation is the problem of finding a group of skilled 
% individuals who can collaborate effectively in order to accomplish a given task \cite{LapLiuTer09}. \
A closely related and well-studied problem in operations research is the assignment problem.
Here, given a set of agents and a set of tasks, the goal is to find an agent-task assignment minimizing the cost of 
the assignment 
such that exactly one agent is assigned to a task and every task is assigned to some agent. 
This problem can be modeled as a maximum weight matching problem in a weighted bipartite graph. 
 In contrast to the assignment problem, the
 team formation problem considers the underlying social network, which for example models the previous collaborations 
 among the experts,
   while forming teams. 
 The advantage of using such a social network is that the teams that have worked together previously
 are expected to have less communication overhead and work more effectively as a team.
 
% An immediate next question is how to measure the effectiveness of a team being formed in the presence of 
% a social network?
  The criteria explored in the literature so far for measuring the effectiveness of teams 
   are based on the shortest path distances, density, and  
 the cost of the minimum spanning tree of the subgraph induced by the team. 
 Here the density of a subgraph is defined as the ratio of the total weight of the edges within the subgraph over the size
  of the subgraph. Teams that are well connected have high density values.
 %These are often referred in the literature as communication overhead, coordinate cost, or collaborative compatibility score. \\
 Methods based on minimizing diameter (largest shortest path between any two vertices) or 
 cost of the spanning tree have the main advantage that the teams they yield are always connected (provided 
 the underlying social network is connected). 
 However, diameter or spanning tree based objectives are not robust to the changes (addition/deletion of edges) in 
 the social network.
 As  demonstrated in \cite{GajSar12} using various performance measures, 
 the density based objective performs better in identifying  well connected teams.
 On the other hand, maximizing density may give a team whose subgraph is disconnected. 
 This happens especially when there are small groups of people who are highly connected with each other but 
 are sparsely connected to the rest of the graph. 
 
 Existing methods make either strong assumptions 
 on the problem that do not hold in practice or are not capable of incorporating more intuitive constraints such as
   bounding the total size of the team. 
    The goal of this paper is to consider the team formation problem in a more realistic setting
 and  present a novel formulation based on a generalization of the densest subgraph problem. 
 Our formulation allows modeling of many realistic requirements such as
  (i) inclusion of  a designated team leader
 and/or  a group of given experts, 
 (ii) restriction on the size or more generally cost of the team
 (iii) enforcing \textit{locality} of the team, e.g., in a geographical sense or social sense, etc.
% (iii) mutual exclusion of pairs of experts (e.g., because of their personalities or geographical location), etc. 
 In fact most of the future directions pointed out by \cite{GajSar12} are covered in our formulation.  
%\textbf {Notation:} $\R_+$ and $\R_{++}$ denote non-negative resp. positive real numbers. If $A \subset V$ with $|V|=n$, then $\ones_A$
%is the indicator vector of the set $A$, that is $\ones_A \in \R^n$ with $(\ones_A)_i=1$ if $i \in A$ and otherwise zero. We  denote
%set functions via $\hat{R}$ and their continuous counterpart as $R$.
%Define $\cut_{G}(C, \overline{C})$.
%

\section{Related Work}
% The existing literature on the team formation problem can be divided according to the property of the graph that is being optimized. 
% The criteria explored in the literature so far are based on the shortest path distances, density, and  
% the minimum spanning tree on the subgraph induced by the team. 
% These are often referred in the literature as communication overhead, coordinate cost, or collaborative compatibility score. \\
 
 The first work \cite{LapLiuTer09}   in the team formation problem in the presence of a social network presents greedy 
 algorithms for minimizing the diameter and the cost of the minimum spanning tree (MST) induced by the team.
% cost of the minimum spanning tree (MST, for short) of the solution. 
While the greedy algorithm for 
 minimizing the diameter 
 % the following is correct but removed because of space constraint.
 %(computed based on any graph distance function satisfying the triangle inequality) 
 has an approximation guarantee of 
 two, no guarantee is proven for the MST algorithm. However, \cite{LapLiuTer09} impose the strong assumption that a skill requirement of a task can be fulfilled by 
 a single person; thus  a more natural requirement such as ``at least $k$ experts of skill $s$ are needed for the task'' cannot be handled by 
 their method. This shortcoming has been addressed in \cite{GajSar12}, which 
 presents a 2-approximation algorithm for a slightly more general problem that can accommodate the above requirement. 
 However, both algorithms cannot handle an upper bound 
 constraint on the team size.
 On the other hand, the solutions obtained by all these algorithms (including the MST algorithm) can be shown to be
   connected subgraphs if 
 the underlying social graph is connected. 
 
 Two new formulations are proposed in \cite{KarAn11} based on the shortest path distances between the nodes of the graph. 
 The first formulation assumes that experts from each skill have to communicate with every expert from the other skill and thus minimizes the 
 sum of the pairwise shortest path distances between experts belonging to different skills. 
 They prove that this problem is NP-hard and provide a greedy algorithm with an approximation guarantee of two. 
 The second formulation, solvable optimally in polynomial time, assumes that there is a designated team leader who 
 has to communicate with every expert in the team and minimizes the sum of the distances only to the leader. 
 The main shortcoming of this work is its restrictive assumption that {\it exactly} 
 one expert is sufficient for each skill,
 which implies that
 %Because of this assumption, 
 the size of the found teams is %found by this method is 
 always upper bounded by the number of skills in the given task, noting that an expert is allowed to have multiple skills. 
 They exploit this assumption and (are the first to) produce top-$k$ teams that can perform the given task.
  However, although based on the shortest path distances, neither of the two formulations does guarantee that the solution 
 obtained is connected.
 
  In contrast to the distance or diameter based cost functions, \cite{GajSar12} explore the usefulness of the density based objective
%    of the subgraph ( (defined as the total edge weight over the cardinality of the subgraph) 
  in  finding strongly connected teams. 
      Using various performance measures,
  the superiority of the density based objective function over the diameter objective is demonstrated.
%  This objective possesses good properties like strict monotonicity (with respect addition or deletion of edges) and 
%  and is robust to noise, unlike diameter objective. 	
  The setting considered in \cite{GajSar12} is the most general one %for the team formation problem 
  until now
  but the resulting problem is shown to be NP hard. 
   The greedy algorithms that they propose 
  have approximation guarantees (of factor 3) for two special cases. %, which are restrictive.
  The teams found by their algorithms are often quite large and 
  it is not straightforward to modify their algorithms to integrate an additional 
  upper bound constraint on the team size.
  Another disadvantage is that subgraphs that maximize the density under the given constraints 
  need not necessarily be connected.
%  The main disadvantage with their approach is that it is not straightforward to add an upper bound constraint on the 
%  team size or bound the number of experts from a particular skill. 
%  Maximizing this cost function does not guarantee that the obtained teams are connected subgraphs.\\

   Recently \cite{AnaBecCasGioLeo12} considered an {\it online} team formation problem where
   tasks arrive in a sequential manner and teams have to be formed minimizing the (maximum) load on any expert across the tasks
   while bounding the coordination cost (a free parameter)  within a team for any given task. Approximation algorithms are
   provided for two variants of coordinate costs: diameter cost and Steiner cost (cost of the minimum Steiner tree where the team 
   members are the terminal nodes).
%  Their main focus is the load
   While this work focusses more on the load balancing aspect, it also makes the strong assumption that a skill is covered by the team 
   if there exists at least one expert having that skill.
   
  All of the above methods allow only binary skill level, i.e., an expert has a skill level of either one or zero. 
%  But these methods guarantee that the skill requirements of the team formation problem are always fulfilled.\\   
%   \textbf {Social network construction:} 
%It is important to note that the social network that diameter based methods utilize is different from the
%   network required 
%   by density based  methods. In the former case, a low weight on an edge between two experts 
%   means that the experts can collaborate more effectively than the experts with high edge weight. This weighting scheme 
%   has to be reversed for the methods maximizing the density of subgraphs.
     
 We point out that many methods have been developed in the 
 operations research community for the team formation problem,
  \cite{BayDerDas07, CheLin04, ZzkKus04, WiOhMunJun09},
    but none of them explicitly considers the underlying social or professional connections 
 among the experts. There is also literature discussing the social aspects of the 
 team formation \cite{Contractor2013} and their influence on the evolution of communities, e.g., \cite{BacHutKleXia06}. 
 
%%%%%%%%%%%%%%%%%%%%%%%%%%%%%%%%%%%%%%%%%%%%%%%%%%%%%%%%%%%%%
%%%%%%%%%%%%%%%%%%%%%%%%%%%%%%%%%%%%%%%%%%%%%%%%%%%%%%%%%%%%%
%%%%%%%%%%%%%%%%%%%%%%%%%%%%%%%%%%%%%%%%%%%%%%%%%%%%%%%%%%%%%

\section{Realistic Team Formation in Social Networks}
 Now we formally define the {\it Team Formation} problem that we address in this paper.
 Let $V$ be the set of $n$ experts and $G(V,W)$ be the weighted, undirected graph reflecting the relationship or 
 previous collaboration of the experts $V$.
%Let $G(V, W)$ be a social network capturing the collaborative compatibility of experts $V$.
% Let the pairwise collaborative compatibility between experts be encoded in 
% the symmetric non-negative weight matrix $W$. 
Then non-negative, symmetric weight $w_{ij} \in W$ connecting two experts $i$ and $j$ reflects the level of compatibility 
between them.  
%We  use the notation $\vol_{g}(C)$ for the generalized volume of a subset $C$ defined as $\sum_{i \in C} g_{i}$, 
%where $g_{i}$ is the weight of vertex $i$.
% where $d_{i}=\sum_{j=1}^n w_{ij}$ is the degree of vertex $i$.
The set of skills is given by $\A = \{a_{1}, \ldots, a_{p}\}$.
%Let $\A = \{a_{1}, \ldots, a_{m}\}$ denote the set of skills.
Each expert is assumed to possess one or more skills. The non-negative matrix $M \in \R^{n \times p}$ 
specifies the skill levels of all experts in each skill. Note that we define the skill level on a continuous scale.
 If an expert $i$ does not have skill $j$, then $M_{ij} = 0$. Moreover, we use the notation $M_j \in \R^ {n \times 1}$ for the $j-$th column of $M$, i.e.\, the vector of skill levels corresponding to skill $j$.
 A task $\T$ is given by the set of triples $\{(a_{j}, \kappa_{j}, \iota_{j})\}_{j=1}^{p}$, where $a_{j} \in \A$, specifying that 
 at least $\kappa_{j}$ and at most $\iota_{j}$ of skill $a_{j}$ is required to finish the given task. \\
%A task $\T$ is a subset of skills required to finish it, that is $\T \subseteq \A$. \\
%  A  task $\T$ is specified by a set of pairs $\{(a_{i}, k_{i})\}$ which means that to perform $\T$, at least $k_{i}$ experts 
%   of skill $a_{i}$ are required. \\

\textbf{Generalized team formation problem.} 
Given a task  $\T$, the generalized team formation problem is defined as finding a team $C \subseteq V$ of experts 
maximizing the {\it collaborative compatibility} and 
satisfying the following constraints:
\begin{itemize} 
\item 
\textbf{Inclusion of a specified group:} a predetermined group of experts $S \subset V$ should be in $C$.
\item 
\textbf{Skill requirement:} at least $\kappa_{j}$ and at most $\iota_{j}$ of skill $a_{j}$ is required to finish the task $\T$. 
\item 
\textbf{Bound on the team size:} the size of the team should be smaller than or equal to $b$, i.e., $\abs{C} \le b$.
\item 
\textbf{Budget constraint:} total budget for finishing the task is bounded by $B$, i.e., 
					$\sum_{i \in C} c_{i} \le B$, where $c_{i} \in \R_+$ is the cost incurred on expert $i$.			
			
\item 
\textbf{Distance based constraint:} the distance  (measured according to some non-negative, symmetric function, $\dist$) between
	any pair of experts in $C$ should not be larger than $d_{0}$, 
	i.e., $\dist(u, v) \le d_{0}, \forall u, v \in C$.
%	a pair of experts, $(u,v)$ cannot be in $C$ together if the 
%	distance (measured according to some non-negative, symmetric function, $\dist$) between $u$ and $v$ is larger than $d_{0}$,
%	i.e., $u, v \in C \implies \dist(u, v) \le d_{0}$.
\end{itemize}
%textbf{} % \\

%We denote an instance of the team formation problem by $(\T, S, p, B, \D)$, where 
% $\D$ is the pair $(\dist, d_{0})$ giving the distance function and a threshold $d_{0}$.\\

%We refer to the above general team formation problem as $\Pr(S, \T, b, c, B, \dist)$.\\

\textbf{Discussion of our generalized constraints.}
In contrast to existing methods, we also allow an upper bound on each skill 
 and on the total team size. If the skill matrix is only allowed to be binary as in previous work, this translates into upper and lower bounds
on the number of experts required for each skill. 
Using vertex weights, we can in fact encode more generic constraints, e.g., having a limit on the total budget 
 of the team.
 % or restricting the team only to young (based on experience) professionals. 
It is not straightforward to extend existing methods to include any upper bound constraints. %of any kind. 
%Note that there is no known approximation algorithm for the densest subgraph problem with at most $l$ vertices. 
Up to our knowledge we are the first to integrate upper bound constraints, in particular on the size of the team, 
into the team formation problem. 
We think that the latter constraint is essential for realistic team formation.

Our general setting also allows a group of experts around whom the team has to be formed. This constraint often applies
as the team leader is usually fixed before forming the team.
Another important generalization is the inclusion of {\it distance} constraints for any general distance function\footnote{The distance function need not satisfy the triangle inequality.}.  
Such a constraint can be used to enforce locality of the team e.g. in a geographical sense (the distance could be 
travel time) or social sense (distance in the network). Another potential application 
%is the consideration of 
are mutual incompatibilities
of team members e.g.\;on a personal level,
which can be addressed
% This can be achieved 
 by assigning a high distance to experts who are mutually incompatible
and thus should not be put together in the same team.

%One can use such a constraint to specify that two experts cannot work together because they are 
%far away from each other geographically or according to the underlying social network.
%One nice application of of such a constraint is finding teams where 
%all team members are either connected to each other (cliques) or 
%they have a common friend; this can be modeled by fixing $d_{0} = 2$ and using for $\dist$ the shortest path distances on the given 
%social network.\\

%****Rewrite the following******\\
% A nice toy example showing our result of disconnected components and the use of distance constraint. \\
% ****Rewrite the above *********\\

   We emphasize that all constraints considered in the  literature are special 
instances of the above constraint set.\\

\textbf{Measure of collaborative compatiblity.} 
 In this paper we use as a measure of collaborative compatibility a generalized form  of the density of subgraphs, defined as 
 \begin{align}\label{eq:gen_density}
 	%\density := \frac{\assoc(C)}{\vol{(C)}} =  \frac{ \sum_{i,j \in C} w_{ij} } {\sum _{i \in C} r_{i}},   %\vol(C)}
	\density(C) := \frac{ \assoc(C) } { \vol_{g}(C)} =  \frac{ \sum_{i,j \in C} w_{ij} } {\sum _{i \in C} g_{i}},   %\vol(C)}
	%\density(C) :=  \frac{ \sum_{i,j \in C} w_{ij} } {\vol(C)}
 \end{align}
 where $w_{ij}$ is the non-negative weight of the edge between $i$ and $j$  %or {\it rank} 
  and $\vol_{g}(C)$ is defined as $\sum_{i \in C} g_{i}$, with $g_{i}$ being the positive weight of the vertex $i$.
 We recover the original density formulation, via $g_i=1, \forall i \in V$.
 We use the relation, $\assoc(C) = \vol_{d}(C) - \cut(C, V \backslash C)$, 
 where $d_{i}=\sum_{j=1}^n w_{ij}$ is the degree of vertex $i$ and 
 $\cut(A, B) := \sum_{i \in A, j \in B} w_{ij}$.
 \\
% We sometimes use the notation $\vol(C)$ for the {\it generalized} volume given by $\vol(C) := \sum_{i \in C} r_{i}$. 
% In the special case of $r_{i}=1,\ \forall i$, we retrieve the cardinality function. 
%Trivially be extended to a project which a set of tasks?? Then minimize the total load?

\textbf{Discussion of density based objective.}
 As pointed out in \cite{GajSar12}, the density based objective possesses useful properties like 
 strict monotonicity and robustness.
 %, which is important especially because 
  % the social network reflecting the collaboration of experts is dynamic.
 In case of the density based objective, if an edge gets added (because of a new collaboration) 
 or deleted (because of newly found incompatibility) the density of the subgraphs involving this edge 
 necessarily increases resp.\;decreases, which is
 not true for the diameter based objective. 
 In contrast to density based objective, the impact of small changes in graph structure is more severe in 
 the case of diameter objective \cite{GajSar12}.
%  Furthermore, while in the case of the density objective, adding or deleting a single edge usually has a small impact compared to the total weight of the edges, it may lead to a drastical change in the case of the diameter objective \cite{GajSar12}.
% can lead to large changes in the diameter objective whereas%% ??
% On the other hand, in case of the diameter based objective, deletion of few edges that are involved in several shortest paths 
% drastically affect the solution of the problem. \\

 The generalized density that we use here leads to further modeling freedom as it enables to give weights 
 to the experts according to their expertise. 
 By giving smaller weight to those with high expertise, one can obtain 
 solutions that not only satisfy the given skill requirements but also give preference to the more competent team members (i.e.\;the ones having smaller weights).\\
 
 %are the best in terms of the competence of the 
 %team members.
% 
%We now formally define the generalized team formation problem.\\
%
\textbf{Problem Formulation.}
Using the notation introduced above, an instance of the team formation problem 
%An instance of the team formation problem $\Pr(S,\T, p, c, B, \dist)$
 based on the generalized density 
 %subject to the above constraints 
 can be formulated as
% \begin{align}\label{eq:teamProb}
%	\underset{C \subseteq V} \max &\; \frac{ \sum_{i,j  \in C} w_{ij}} {\sum_{i \in C} r_{i}}  \\%{\vol(C)}\\
%	\subj: 	&\; S  \subseteq C \nonumber \\ 
%		&\; k_{j} \leq  \sum_{i \in C} M_{ij}  \leq l_{j},\quad \forall j \in \{1, \ldots, m\} \nonumber \\	         
%%		 &\; \sum_{i \in C} 1 \le p\\
%		 &\; \abs{C} \le p \nonumber \\
%		 &\; \sum_{i \in C} c_{i} \le B \nonumber \\		 
%		 &\; \dist(u, v) \le d_{0}, \forall u, v \in C, \nonumber
% \end{align}
 \begin{align}\label{eq:teamProb}
	\underset{C \subseteq V} \max &\; \frac{ \assoc(C)} {\vol_{g}(C)}  \\%{\vol(C)}\\
	\subj: 	&\; S  \subseteq C \nonumber \\ 
		&\; \kappa_{j} \leq  \vol_{M_{j}}(C) \leq \iota_{j},\quad \forall j \in \{1, \ldots, p\} \nonumber \\	         
%		 &\; \sum_{i \in C} 1 \le p\\
		 &\; \abs{C} \le b \nonumber \\
		 %&\; \vol_{\ones}{(C)} \le p \nonumber \\
		 &\; \vol_{c} (C) \le B \nonumber \\		 
		 &\; \dist(u, v) \le d_{0}, \quad \forall u, v \in C, \nonumber
 \end{align}
% where $g \in \R^{n}_{++}$ are the weights associated with each expert $i$, 
% and the matrix $M$ specifies 
% skill levels of all experts in each of the skill.
% Note that the skill constraint is written here in such a way that the bounds $k_j,l_j, j=1,\cdots,p$ refer to the skill requirement \textit{excluding} the set $S$.
% For brevity, here we assume without loss of generality that the bounds $k_{j}, l_{j}$ 
% refer to the skill requirement \textit{excluding} the set $S$.
 %Further we assume that $dist(u, v) \le d_{0}, \forall u, v \in S$, otherwise the above problem is infeasible.\\
 Note that the upper bound constraints on the team size and the budget can be 
 rewritten as skill constraints and can be incorporated into the skill matrix $M$ accordingly. 
 Thus, without loss of generality, we omit the budget and size constraints from now on, for the sake of brevity.
Moreover, since $S$ is required to be part of the solution, we can assume that 
$dist(u, v) \le d_{0}, \forall u, v \in S$, otherwise the above problem is infeasible.
 The distance constraint also implies that any $u \in V$ for which $\dist(u, s) > d_{0}$, for some $s \in S$, cannot be a part of the solution. 
 Thus, we again assume wlog that there is no such $u \in V$; otherwise such vertices can be eliminated without 
 changing the solution of problem \eqref{eq:teamProb}. 
% In the following we assume that there is no such $u \in V$. This can be assumed wlog since if there is such an $u$, the graph can be reduced by eliminating the corresponding vertices, without changing the solution of problem \eqref{eq:teamProb}. 
\\

  Our formulation  \eqref{eq:teamProb} is a generalized version of the classical densest subgraph problem (DSP),
  which has many applications in graph analysis, e.g., see \cite{Sah2010}. 
 % The problem of finding dense subgraphs without any constraints 
  The simplest version of DSP is the problem of finding a densest subgraph (without any constraints on the solution), 
  which can be solved optimally in polynomial time \cite{Gol84}. 
%  However, requiring the solution to satisfy a size constraint makes the problem NP hard. 
  The densest-$k$-subgraph problem, which requires the solution to contain exactly $k$ vertices,
  %The densest subgraph problem with an exact size constraint 
  is a notoriously hard problem in this class
  and has been shown not to admit a polynomial time approximation scheme \cite{Kho06}.
%  and the best known approximation guarantee currently is $O(n^{1/4})$ \ref{}
  Recently, it has been shown that the densest subgraph problem with an upper bound on the size is as hard as the 
  densest-$k$-subgraph problem \cite{KhuSah09}. 
   However, the densest subgraph problem with a lower bound constraint has a 2-approximation algorithm \cite{KhuSah09}.
  It is based on solving a sequence of unconstrained densest subgraph problems. 
   They also show that there exists a linear programming relaxation for this problem achieving the same approximation guarantee.
  
  Recently \cite{GajSar12} considered the following generalized version of the densest subgraph problem %$DSP$ 
  with lower bound constraints in the context of team formation problem:
    \begin{align}\label{eq:teamProbLb}
	\underset{C \subseteq V} \max &\; \frac{ \assoc(C)} {\vol_{g}(C)}  \\%{\vol(C)}\\
	\subj:
 		&\; \vol_{M_{j}}(C)  \geq \kappa_{j},\quad \forall j \in \{1, \ldots, p\} \nonumber
	 \nonumber          
 \end{align}
 where $M$ is the \textit{binary} skill matrix. 
 They extend the greedy method of \cite{KhuSah09} and show that it achieves a 3-approximation guarantee 
 for some special cases of this problem. \cite{Venk12} recently improved  the approximation guarantee of the greedy algorithm of \cite{GajSar12} for problem  (\ref{eq:teamProbLb}) to a factor 2.
 The time complexity of this greedy algorithm  %for their general setting 
is $O(kn^{3})$, where $n$ is the number of experts and $k := \sum_{j=1}^{m}{k_{j}}$ is the minimum number of experts %\footnote{Note that $\cite{GajSar12}$ can handle only binary skill matrices $M$.} 
 required. 
% \textbf{Simplification of distance based constraint} $\omega, \mathcal{g}$
% Note that the distance based constraint of problem (\ref{eq:teamProb}) can be equivalently written in a more simple form:
%\[
%	D(u, v) = 0,\ \forall u, v \in C,
%\]
%where $D := \max\{0,\ \dist(u, v) - d_{0} \}$.
%%		\begin{align*}		
%%		{D}(u, v)  &:=   \left\{\begin{array}{ll} 0, & \dist(u, v) \le d_{0},\\
%%							\dist(u, v) - d_{0}, & \mathrm{otherwise}.
%%								\end{array} \right.
%%	\end{align*} 
%Further we can apply this constraint to any $u, v \in V$ (not just for $u, v \in C$) via
%	\[ D(u, v) \ones_{u \in C} \ones_{v \in C} = 0, \forall u, v \in V.\]	
% \\
\\

 \textbf{Direct integration of subset constraint.}
 The subset constraint can be integrated into the objective
 by directly working on the subgraph $\Gp$ induced by the vertex set $\Vp = V \backslash S$.
  Note that any $C \subset V$ that contains $S$ can be written as $C = A \cup S$, for $A \subset \Vp$.
 We now reformulate the team formation problem on the subgraph $\Gp$. We introduce the notation $m=\abs{\Vp}$, and we assume wlog that the first $m$ entries of $V$ are the ones in $\Vp$.
 %Let $d^{\prime}$ be the degree of vertices in $\Gc^{\prime}$ and 
 %Let $d^{S}_{i} = \sum_{j \in S} \omega_{ij}$ 
 %denote the degree of vertex $i$ restricted to the subset $S$ in the original graph.
 %Further let $\mu_{S} = \assoc(S)$ and $\nu_{S} = \vol_{g}(S)$.
% Let $g^{\prime}, \M^{\prime}$ be the restrictions of $g, M$ on to the set $\Vc^{\prime}$
 %We also denote $\overline{A} = \Vc^{\prime} \backslash A$.

  The terms in problem (\ref{eq:teamProb}) can be rewritten as 
  \begin{align*}
    	\assoc(C) &= \assoc(A) + \assoc(S) + 2\ \cut(A,S), \\
    			& = \vol_d(A) -\cut(A,V \backslash A) + \assoc(S) + 2\ \cut(A,S) \\
    			& = \vol_d(A) -\cut(A,\Vc^{\prime} \backslash A) + \assoc(S) + \cut(A,S)\\
	%		&= \vol_{d^{\prime}}(A) - \cut(A, \overline{A}) + \mu_{S} + 2\vol_{d^{S}} (A)\\
			%&=: \assoc_{S}(C)\\
 	 \vol_{g}(C) &= \vol_{g}(A) + \vol_{g}(S) %\nu_{S},
  \end{align*}
%\begin{align*}
%    	\assoc(C) &= \assoc(A) + \assoc(S) + 2\ \cut(A,S), \\
%    		%	& = \vol_d(A) -\cut(A,V \backslash A) + \mu_{S} + 2\ \cut(A,S) \\
%    			& = \vol_d(A) -\cut(A,\Vc^{\prime} \backslash A) + \assoc(S) + \cut(A,S)\\
%	%		&= \vol_{d^{\prime}}(A) - \cut(A, \overline{A}) + \mu_{S} + 2\vol_{d^{S}} (A)\\
%			%&=: \assoc_{S}(C)\\
% 	 \vol_{g}(C) &= \vol_{g}(A) + \vol_{g}(S), 
%  \end{align*}
%where we introduced the short notation $\mu_{S} = \assoc(S)$ and $\nu_{S} = \vol_{g}(S)$. 
Moreover, note that we can write: $\cut(A,S)  = \vol_{d^{S}} (A)$, where $d^{S}_{i} = \sum_{j \in S} w_{ij}$ 
 denotes the degree of vertex $i$ restricted to the subset $S$ in the original graph.
 Using the abbreviations, % $\assoc_{S}(A) =  \vol_{d^{\prime}}(A) - \cut(A, \overline{A})+ 2\ \vol_{d^{S}} (A) + \mu_{S} $,
$\mu_{S} = \assoc(S)$, $\nu_{S} = \vol_{g}(S)$, $\assoc_{S}(A) =  \vol_{d}(A) - \cut(A,\Vp \backslash A) + \mu_{S} +  \vol_{d^{S}} (A)$,
we rewrite the team formation problem \eqref{eq:teamProb} as 
% \begin{align}\label{eq:teamProbSim}
%	\underset{A \subseteq V\backslash S} \max &\; \frac{ \assoc(A) + 2\cut(A, S) + \assoc(S) } {\vol_{g}(A) + \vol_{g}(S)}  \\%{\vol(C)}\\
%	\subj: &\; k_{j}  \leq  \vol_{M_{j}}(C)  \leq l_{j},\quad \forall j \in \{1, \ldots, m+2\} \nonumber \\	         
%%		 &\; \sum_{i \in C} 1 \le p\\
%		 %&\; \abs{C} \le p \nonumber \\
%		 %&\; \vol_{\ones}{(C)} \le p \nonumber \\
%		 %&\; \vol_{c} (C) \le B \nonumber \\		 
%		 &\; D(u, v)  = 0, \forall u, v \in C, \nonumber
% \end{align}
  \begin{align}\label{eq:teamProbSim}
	\underset{A \subseteq \Vc^{\prime}, \ A \neq \emptyset} \max &\; 
		 \frac{ \assoc_{S}(A) } %\vol_{d^{\prime}}(A) - \cut(A, \overline{A})+ 2\ \vol_{d^{S}} (A) + \mu_{S} } 
		 {\vol_{g}(A) + \nu_{S}}  \tag{GDSP} \\%{\vol(C)}\\
	\subj: &\; k_{j}  \leq  \vol_{M_{j}}(A)  \leq l_{j},\quad \forall j \in \{1, \ldots, p\} \nonumber \\	         
		&\; \dist(u, v) \le d_{0},\quad \forall u, v \in A, \nonumber%\\
		% &\; \dist(u, s) \le d_{0}, \forall u \in A, \forall s \in S . \nonumber
%		 &\; \sum_{i \in C} 1 \le p\\
		 %&\; \abs{C} \le p \nonumber \\
		 %&\; \vol_{\ones}{(C)} \le p \nonumber \\
		 %&\; \vol_{c} (C) \le B \nonumber \\		 
%		 &\; D(u, v) \ones_{u \in A} \ones_{v \in A} = 0, \forall u, v \in V^{\prime}, \nonumber\\
%		 &\; D(u, s) \ones_{u \in A}  = 0, \forall u \in V^{\prime}, \forall s \in S \nonumber
 \end{align}
 where for all $j=1,\dots,p$, the bounds were updated as $k_j = \kappa_j-\vol_{M_j}(S),\ l_j = \iota_j-\vol_{M_j}(S)$. 
 Note that here we already used the assumption: $\dist(u, s) \le d_{0}, \forall u \in V, \forall s \in S$. The constraint, $A \neq \emptyset$, 
 has been introduced for technical reasons required for the formulation of the continuous problem in Section \ref{sec:equiv_cont_prob}. 
 The equivalence of problem \eqref{eq:teamProbSim} to  \eqref{eq:teamProb} follows by considering either $S$ (if feasible) 
or  the set $A^{*} \cup S$, where $A^{*}$ is an optimal solution of 
 \eqref{eq:teamProbSim}, depending on whichever has higher density.

 To the best of our knowledge there is no greedy algorithm with an approximation guarantee to solve 
 problem (\ref{eq:teamProbSim}).
 Instead of designing a greedy approximation algorithm for this discrete optimization problem, 
 we derive an \textit{equivalent}
 continuous optimization problem in Section \ref{sec:FORTE}.
 That is, we reformulate the discrete problem in continuous space %from the discrete space to the continuous space 
 while preserving the optimality of the solutions of the discrete problem. 
 The rationale behind this approach is that the continuous formulation is more flexible
 and allows us to choose from a larger set of methods for its solution than for the discrete one.
Although the resulting continuous problem is as hard as the original discrete problem, 
recent progress in continuous optimization 
\cite{HeiSet11} 
%\cite{HeiSet11, BueRanHeiSet12} 
allow us to find a locally optimal solution very efficiently.

%Moreover, in order to assess the quality of the solutions obtained by our method 
%we also derive a linear programming (LP) relaxation for \eqref{eq:teamProbSim} in Section \ref{sec:LPrel}.
%The LP, which can be solved globally optimal, provides an upper bound on the optimal value of the problem (\ref{eq:teamProbSim}).
% The resulting equivalent continuous problem in our case, expectedly, cannot be solved optimally because of its non-convexity. 
% Our empirical evaluation shows further evidence that we always obtain very good solutions, 
% even near optimal values in some instances, which indicates 
% the better solvability of the continuous problem. 
% Furthermore, the algorithm given in Section \ref{sec:algo} has the following important quality guarantee: 
% given any (sub-optimal) solution as input (possibly obtained from other methods) the algorithm either stops in one step 
% or produces a strictly better solution, which also 
% satisfies all constraints of the discrete problem. 
% \\
  
\section{Derivation of FORTE}  \label{sec:FORTE}
 In this section we present our method, \textit{Formation Of Realistic Teams} (\FORTE, for short) to solve 
 % generalized team formation
 the team formation problem, which is rewritten as \eqref{eq:teamProbSim},
 using the continuous relaxation. 
  We derive 
  %our method 
  \FORTE\, in three steps:
  \begin{enumerate}
\renewcommand{\theenumi}{\roman{enumi}}
\item Derive an equivalent unconstrained discrete problem (\ref{eq:teamProbUncstr}) of the team formation problem (\ref{eq:teamProbSim}) via an \textit{exact penalty} approach.
\item Derive an equivalent continuous relaxation (\ref{eq:teamProbCont}) of the unconstrained problem \eqref{eq:teamProbUncstr}
	by using the concept of \textit{Lovasz extensions}.
\item Compute the solution of the continuous problem (\ref{eq:teamProbCont}) using the recent method RatioDCA from \textit{fractional programming}.
\end{enumerate}
%This section is based on our current work \cite{BueRanHeiSet12} where we consider general fractional set programs. However, for the adaptation to the team formation problem, in particular the handling of the specific constraints, we extend the results in \cite{BueRanHeiSet12}.

% \subsection{Derivation of an equivalent unconstrained problem}
 \subsection{Equivalent Unconstrained Problem}
 A general technique in constrained optimization is to transform the constrained problem 
 into an equivalent unconstrained problem by adding to the objective a penalty term, which is 
 controlled by a parameter $\gamma \geq 0$. 
% suitable penalty terms for each of the constraints. 
 The penalty term is zero if the constraints are satisfied at the given input and strictly positive otherwise. 
 The choice of the regularization parameter $\gamma$ influences the tradeoff between satisfying the constraints and having a low objective value.
% Since the penalty term is regulated by a parameter $\gamma \ge 0$, 
Large values of $\gamma$ 
 tend to enforce the satisfaction of constraints. 
 In the following we show that for the team formation problem (\ref{eq:teamProbSim}) there exists a  value of 
 $\gamma$ that guarantees the satisfaction of all constraints. \\
 
%We will now define the penalty terms for the constraints of problem (\ref{eq:teamProbSim}). 
Let us define the penalty term for constraints of the team formation problem (\ref{eq:teamProbSim}) as
	\begin{align*}
	 {\pen}(A) &:=  \left\{\begin{array}{ll} 
					 \sum_{j=1}^{p}\max\{0, \vol_{M_{j}}(A) - l_{j} \}\\					 	 + \sum_{j=1}^{p}\max\{0, k_j- \vol_{M_{j}}(A) \} \\
	% + \sum_{u, v \in A}D_{uv},
	 + \sum_{u, v \in A}\max\{0,\ \dist(u, v) - d_{0} \}
	 %\ \ones_{u \in A} \ones_{v \in A}, 
	 & A \neq \emptyset\\
	 0 & A = \emptyset. \end{array} \right.
	\end{align*}
	 Note  that the above penalty function is zero only when $A$ satisfies the constraints; otherwise it is
	 strictly positive and increases with increasing infeasibility.  	 
	 The special treatment of the empty set 
	 %given while defining the penalty for the empty set is only a technicality needed 
	 is again a technicality required later for the Lovasz extensions, see Section \ref{sec:equiv_cont_prob}.
	 %for obtaining the Lovasz extension, which is currently defined for set functions that are zero on the empty set. 
	 For the same reason, we also replace the constant terms $\mu_{S}$ %(present in $\assoc_{S}(A)$) 
	 and $\nu_{S}$ in 	(\ref{eq:teamProbSim}) by $\mu_{S}\unit(A)$ and $\nu_{S}\unit(A)$ respectively, where 
	 $\unit(A) := 1, A \neq \emptyset$ and $\unit(\emptyset) = 0$.
	 \\
%	 Let $\theta>0$ denote the minimum value of infeasibility, i.e., $\widehat{P_{S}}(C) + \widehat{P_{\le}}(C) \ge \theta$, if 
%	 $C$ is not feasible for problem (\ref{eq:teamProb}). \\

 The following theorem shows that there exists an unconstrained problem equivalent to the constrained optimization problem (\ref{eq:teamProbSim}).
 \begin{theorem} 	\label{thm:teamProbUncstr}
	 The constrained problem (\ref{eq:teamProbSim}) is equivalent to the unconstrained problem
	 \begin{align}\label{eq:teamProbUncstr}
	 	\underset{\emptyset \neq A \subseteq V} \min &\; 
				\frac
				 {\vol_{g}(A) + \nu_{S} \unit(A) + \gamma \pen(A)}
				 { \assoc_{S}(A) } 
 \end{align}
for $\gamma >  \frac{\vol_{d}(V)}{\theta}\; \frac {\vol_{g}(A_{0}) + \nu_{S} } { \assoc_{S}(A_{0})}$, 
where $A_{0}$ is any feasible set of problem (\ref{eq:teamProbSim}) such that $\assoc_{S}(A_{0}) > 0$ 
and $\theta$ is the minimum value of infeasibility, i.e., ${\pen} (A) \ge \theta$, if $A$ is infeasible.
 
 \begin{proof}
 We define 
%First note that the maximization problem (\ref{eq:teamProbSim}) can be equivalently written as minimization problem, where we have of the objective 
%into a minimization problem (\ref{eq:teamProbUncstr}).
% Let 
 $\spvol(A) :=  \frac  {\vol_{g}(A) + \nu_{S} \unit(A)} { \assoc_{S}(A) }$.
  Note that maximizing (\ref{eq:teamProbSim}) is the same as minimizing
 $\spvol(A)$ subject to the constraints of (\ref{eq:teamProbSim}). 
  	For any feasible subset $A$, the objective of (\ref{eq:teamProbUncstr}) is equal to $\spvol(A)$,
	since the penalty term is zero.
	Thus, if we show that \textit{all} minimizers of
	(\ref{eq:teamProbUncstr}) satisfy the constraints then the equivalence follows. 
	Suppose, for the sake of contradiction, that $A^{*} (\neq \emptyset$, if $S = \emptyset$) is a minimizer of (\ref{eq:teamProbUncstr}) and that 
 	$A^*$ is infeasible for problem (\ref{eq:teamProbSim}). Since $\nu_{S} \ge 0$ and $g_{i} > 0, \forall i$,
	we have under the given condition on $\gamma$, 
		\begin{align*}	
		&\frac  {\vol_{g}(A^{*}) + \nu_{S} + \gamma\  \pen(A^{*}) } { \assoc_{S}(A^{*})}				
				> \frac{ %\nu_{S} +
				\gamma\  {\pen}(A^{*})} % + \widehat{P_{\le}}(C) \Big)}
				{ \assoc_{S}(A^{*})}\\
				&\ge \frac{%\nu_{S} +
				\gamma\  \theta} % + \widehat{P_{\le}}(C) \Big)}
				{\max_{A \subseteq V} \assoc_{S}(A)}
				 \ge \frac{ %\nu_{S} +
				\gamma\  \theta} % + \widehat{P_{\le}}(C) \Big)}
				{\vol_{d}(V)} > \frac  {\vol_{g}(A_{0}) + \nu_{S}} { \assoc_{S}(A_{0}) },
		\end{align*}	
		which leads to a contradiction because the last term is the objective value of (\ref{eq:teamProbUncstr}) at $A_{0}$.
\end{proof}			
\end{theorem}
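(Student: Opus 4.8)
The plan is to deduce the equivalence from two observations: that on feasible sets the two objectives coincide, and that for $\gamma$ as large as in the hypothesis every minimizer of the unconstrained problem \eqref{eq:teamProbUncstr} is in fact feasible for \eqref{eq:teamProbSim}. The first observation is immediate once we introduce $\spvol(A) := \frac{\vol_g(A)+\nu_S\unit(A)}{\assoc_S(A)}$: maximizing the density in \eqref{eq:teamProbSim} is the same as minimizing $\spvol$ over the feasible sets, and since $\pen(A)=0$ exactly when $A$ satisfies all constraints of \eqref{eq:teamProbSim}, the objective of \eqref{eq:teamProbUncstr} agrees with $\spvol(A)$ on every feasible $A$. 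As there are only finitely many subsets of $V$, the minimum in \eqref{eq:teamProbUncstr} is attained; hence if it is attained only at feasible sets, the optimal values of the two problems coincide and so do their optimal sets, which is the claimed equivalence.

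So the substantive step is to exclude infeasible minimizers. First I would introduce $\theta>0$, the smallest value attained by $\pen$ over the nonempty subsets of $V$ that are infeasible for \eqref{eq:teamProbSim} (this minimum is positive because it is taken over a finite set and $\pen$ is strictly positive on each infeasible set; if every subset is feasible the theorem is trivial). Suppose, for contradiction, that a minimizer $A^*$ of \eqref{eq:teamProbUncstr} is infeasible. If $\assoc_S(A^*)=0$ the objective value at $A^*$ is $+\infty$, contradicting minimality, since the feasible set $A_0$ from the hypothesis gives the finite value $\spvol(A_0)$; so we may assume $\assoc_S(A^*)>0$, and then $\pen(A^*)\ge\theta$ by infeasibility. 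Dropping the nonnegative terms $\vol_g(A^*)+\nu_S$ from the numerator, then replacing $\pen(A^*)$ by the smaller quantity $\theta$ and $\assoc_S(A^*)$ by the larger quantity $\max_{A}\assoc_S(A)$, and finally using $\max_{A}\assoc_S(A)\le\vol_d(V)$, we obtain
\[
\frac{\vol_g(A^*)+\nu_S+\gamma\,\pen(A^*)}{\assoc_S(A^*)} \;>\; \frac{\gamma\,\pen(A^*)}{\assoc_S(A^*)} \;\ge\; \frac{\gamma\,\theta}{\max_{A}\assoc_S(A)} \;\ge\; \frac{\gamma\,\theta}{\vol_d(V)}.
\]
By the hypothesis $\gamma > \frac{\vol_d(V)}{\theta}\cdot\frac{\vol_g(A_0)+\nu_S}{\assoc_S(A_0)}$, the rightmost expression is strictly larger than $\spvol(A_0)$, i.e.\ the value of the objective of \eqref{eq:teamProbUncstr} at the feasible set $A_0$; this contradicts the minimality of $A^*$. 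Hence every minimizer of \eqref{eq:teamProbUncstr} is feasible, and the equivalence follows.

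The only estimate worth isolating is $\max_{A}\assoc_S(A)\le\vol_d(V)$: recalling $\assoc_S(A)=\vol_d(A)-\cut(A,\Vp\backslash A)+\mu_S+\vol_{d^S}(A)$, which was assembled so that $\assoc_S(A)=\assoc(A\cup S)=\sum_{i,j\in A\cup S}w_{ij}$, the nonnegativity of the weights gives $\assoc_S(A)\le\sum_{i,j\in V}w_{ij}=\vol_d(V)$. I expect the main subtlety to lie not in any single inequality but in the bookkeeping around the empty set: the conventions $\pen(\emptyset)=0$ and the replacement of the constants $\mu_S,\nu_S$ by $\mu_S\unit(A),\nu_S\unit(A)$, together with the constraint $A\neq\emptyset$ inherited from \eqref{eq:teamProbSim}, must be mutually consistent so that $A=\emptyset$ is neither a spurious minimizer of \eqref{eq:teamProbUncstr} nor disturbs the identification with \eqref{eq:teamProbSim}; and one must note that a feasible $A_0$ with $\assoc_S(A_0)>0$ is assumed to exist, since otherwise the statement is vacuous.
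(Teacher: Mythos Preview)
Your proposal is correct and follows essentially the same approach as the paper: reduce to showing that every minimizer of \eqref{eq:teamProbUncstr} is feasible, and derive a contradiction via the same chain of inequalities $\frac{\vol_g(A^*)+\nu_S+\gamma\pen(A^*)}{\assoc_S(A^*)} > \frac{\gamma\pen(A^*)}{\assoc_S(A^*)} \ge \frac{\gamma\theta}{\max_A\assoc_S(A)} \ge \frac{\gamma\theta}{\vol_d(V)} > \spvol(A_0)$. In fact you are more careful than the paper in a few places: you explicitly dispose of the case $\assoc_S(A^*)=0$, you justify the bound $\max_A\assoc_S(A)\le\vol_d(V)$ via $\assoc_S(A)=\assoc(A\cup S)$, and you flag the empty-set bookkeeping, all of which the paper leaves implicit.
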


%Note that in the above proof we also reformulated a maximization problem (\ref{eq:teamProbSim}) into a minimization problem (\ref{eq:teamProbUncstr}).
%\textbf {Remark on computing $\gamma,\ \theta$ and $C_{0}$}. 

\subsection{Equivalent Continuous Problem}\label{sec:equiv_cont_prob}
%In this section we will derive a problem in the continuous space that is equivalent to problem \eqref{eq:teamProbUncstr}
	We will now derive a tight continuous relaxation of problem (\ref{eq:teamProbUncstr}). 
	%The idea is to make use of the concept of Lovasz extension, which is an extension of a set function (seen as function on the hypercube) to a function on the continuous space. 
%	This will lead us to a minimization problem over $\R^m$ where $m=\abs{V \backslash S}$, and we assume wlog that the first $m$ entries of $V$ are the ones in $V \backslash S$.
	This will lead us to a minimization problem over $\R^m$, which then can be handled more easily than the original discrete problem.	
%	Lovasz extensions of the set functions in (\ref{eq:teamProbUncstr}).
	The connection between the discrete and the continuous space is achieved via thresholding. 
	Given a vector $f\in \R^m$, one can define the sets	
\begin{equation}\label{eq:sets}
	A_i:= \left\{ j \in V | f_j \geq f_i\right\},
\end{equation}
	by thresholding $f$ at the value $f_{i}$. 
	In order to go from functions on sets to functions on continuous space, we make use of the concept of Lovasz extensions.
%	The idea is to make use of the concept of Lovasz extension, which is an extension of a set function (seen as function on the hypercube) to a function on the continuous space. 
	%In the following, we will use $\ones_A \in \R^m$ to denote the indicator vector of the set $A$, i.e.\, the vector which is $1$ at index $j$ if $j \in A$ and $0$ otherwise.
\begin{definition} {(Lovasz extension)}\label{def:lovasz}
%Let $\widehat{R}:2^V \rightarrow \R$ be a set function with $\widehat{R}(\emptyset)=0$, and let $f \in \R^n$ be ordered in ascending order 
%$f_1\leq f_2\leq\cdots\leq f_n$. 
Let $R:2^{V} \rightarrow \R$ be a set function with ${R}(\emptyset)=0$, and let $f \in \R^m$ be ordered in ascending order 
$f_1\leq f_2\leq\cdots\leq f_m$. 
%Using the subsets $\Ct_i:= \left\{ j \in V | f_j \geq f_i\right\}$ we 
The Lovasz extension $R^{L}:\R^m \rightarrow \R$ of ${R}$ is defined by
\begin{align*}
	R^{L}(f) % &= \sum_{i=1}^{n-1} \left(\widehat{R}(\Ct_i) - \widehat{R}(\Ct_{i+1})\right) f_i + \widehat{R}(C_n) f_n \\
	%&
	 = \sum_{i=1}^{m-1} {R}(A_{i+1}) \left(f_{i+1}-f_i\right) + {R}(V) f_1.
\end{align*}
\end{definition}
Note that $R^{L}(\ones_A)={R}(A)$ for all $A \subset V$, i.e.\;$R^{L}$ is indeed an extension of ${R}$ from $2^{V}$ to $\R^V$ ($|V|=m$).
% It is obvious that minimizing the Lovasz extension is a relaxation of the 
In the following, given a set function $R$, we will denote its Lovasz extension by $R^{L}$. The explicit forms of the Lovasz extensions used in the derivation
%of the terms $vol_g(A), \cut(A), \assoc_S(A), \unit(A)$ and $\pen(A)$ 
will be dealt with in Section \ref{sec:algo}.

%Note that given a vector $f\in \R^m$ for the continuous problem, one can construct a set $A'$ by computing
%\[
%	A' =  \argmin_{A_i,i=1,\ldots,m} \frac{R^L(A_{i})} {S^ L(A_{i})},
%\]
%where the sets $\A_{i}$ are defined in \eqref{eq:sets}. We refer to this process as \textit{optimal thresholding}.

%In the following, the Lovasz extensions of the terms $vol_g(A), \cut(A), \assoc_S(A), \unit(A)$ and $\pen(A)$ are denoted as $vol^L_g(f), \cut^L(A), \assoc^L_S(A), \unit^L(f)$ and $\pen^L(A)$. The explicit forms will be dealt with in section \ref{sec:algo}.

% Moreover, let $\pen^{L}(f)$ denote the Lovasz extension of $\pen(A)$.
% It is straightforward to derive that the Lovasz extension of $\vol_{g}(A)$ is given by $\inner{g, f}$.
% The Lovasz extension of the 

In the following theorem we show the equivalence for \ref{eq:teamProbSim}. 
A more general result showing equivalence for fractional set programs can be found in \cite{BueRanHeiSet13}.
\begin{theorem}\label{thm:teamProbCont}
	The unconstrained discrete problem (\ref{eq:teamProbUncstr}) is equivalent to the continuous problem
 	 \begin{align}\label{eq:teamProbCont}
%		 \underset{A \subseteq V} \min &\;
%				\frac
%				 {\vol_{g}(A) + \nu_{S} + \gamma \pen(A)}
%				 { \assoc_{S}(A) } \nonumber \\
%				 =
	 	\underset{f \in \R_{+}^{\Vp}} \min &\;
				\frac
				 {  \vol_{g}^{L}(f) + \nu_{S} \unit^L(A) %\max_{i}\{ f_i \} 
				 + \gamma \pen^{L}(f)}
				 { \assoc_{S}^{L}(f) } 		
	\end{align}
 	for any $\gamma \ge 0$. Moreover, optimal thresholding of a minimizer $f^{*} \in \R_{+}^{m}$, %of (\ref{eq:teamProbCont}) 
       \[ A^* := \min_{A_i = \left\{ j \in \Vp | f^*_j \geq f^*_i\right\}, \, i=1,\ldots,m} \frac
				 {\vol_{g}(A_i) + \nu_{S} + \gamma \pen(A_i)}
  			 { \assoc_{S}(A_i) },\]
	yields a set $A^{*}$ that is optimal for problem (\ref{eq:teamProbUncstr}).
\begin{proof}
	Let $R(A) = \vol_{g}(A) + \nu_{S} \unit(A) + \gamma \pen(A)$.
	%Then its
	%Lovasz extension is given by $R^{L}(f) = \vol^{L}_{g}(f) + \nu_{S} \max_{i}\{ f_i \}+ \gamma \pen^{L}(f)$.
	Then we have 
%	Since problem (\ref{eq:teamProbCont}) is a relaxation of (\ref{eq:teamProbUncstr}), 
%	it is clear that 
	\[
		\min_{A \subset \Vp} \frac{R(A)}{\assoc_{S}(A)} 
		= \min_{A \subset \Vp} \frac{R^L(\ones_{A})}{ \assoc^{L}_{S}(\ones_{A})} 
		 \ge \min_{f \in \R_{+}^{\Vp}} \frac{R^{L}(f)}{\assoc^{L}_{S}(f)} ,
	\]
	where in the first step we used the fact that $R^{L}(f)$ and $\assoc^{L}(f)$ are extensions of $R(A)$ and $\assoc(A)$, respectively.
	Below we first show that the above inequality also holds in the other direction, which then establishes 
	that the optimum values of both problems are the same. The proof of the reverse direction will also imply that a set minimizer of the problem 
	(\ref{eq:teamProbUncstr}) can be obtained from any minimizer $f^{*}$ of (\ref{eq:teamProbCont}) via 
	optimal thresholding.\\
	
%	The main ingredient in the proof is the following result:
We first show that
	the optimal thresholding of any $f \in \R^{m}_{+}$ yields a set $A$ 
	such that $\ones_{A}$ has an objective value at least as good as the one of $f$.
%	To show this, let $R(A) = \vol_{g}(A) + \nu_{S} \unit(A) + \gamma \pen(A)$. Then its
%	Lovasz extension is given by $R^{L}(f) = \vol^{L}_{g}(A) + \nu_{S} \max_{i}\{ f_i \}+ \gamma \pen^{L}(A)$.
	This holds because
%	\begin{align*}
%			& R^{L}(f)
%			= \sum_{i=1}^{m-1} R(A_{i+1}) (f_{i+1} - f_{i}) +  f_{1} R(V)\\
%			&= \hskip-0.1cm \sum_{i=1}^{m-1} \hskip-0.13cm \frac{R(A_{i+1}) } {\assoc_{S}(A_{i+1})} \assoc_{S}(A_{i+1}) (f_{i+1}\hskip-0.1cm -\hskip-0.1cm f_{i})
%				 + \frac{R(V)}{\assoc_{S}(V)} \assoc_{S}(V) f_{1}\\
%			& \ge \hskip-0.1cm \min_{j=1, \ldots m} \hskip-0.1cm \frac{ R(A_{j}) } {\assoc_{S}(A_{j})}  
%				  \Big(\hskip-0.1cm \sum_{i=1}^{m-1}\hskip-0.1cm  \assoc_{S}(A_{i+1}) (f_{i+1}\hskip-0.1cm  -\hskip-0.1cm  f_{i})
%				  + \assoc_{S}(V) f_{1}\hskip-0.1cm \Big)\\
%			&= \min_{j=1, \ldots m} \frac{ R(A_{j}) } {\assoc_{S}(A_{j})} \assoc_{S}^{L}(f)\\ 
%	\end{align*}		
	\begin{align*}
			R^{L}(f)
			&= \sum_{i=1}^{m-1} R(A_{i+1}) (f_{i+1} - f_{i}) +  f_{1} R(\Vp)\\
			&= \sum_{i=1}^{m-1} \frac{R(A_{i+1}) } {\assoc_{S}(A_{i+1})} \assoc_{S}(A_{i+1}) (f_{i+1} - f_{i})\\
				& \hspace{10mm} + \frac{R(\Vp)}{\assoc_{S}(\Vp)} \assoc_{S}(\Vp) f_{1}\\
			& \ge \min_{j=1, \ldots m} \frac{ R(A_{j}) } {\assoc_{S}(A_{j})} \\ 
				& \hspace{10mm} \Big(\sum_{i=1}^{m-1} \assoc_{S}(A_{i+1}) (f_{i+1} - f_{i})
				  + \assoc_{S}(\Vp) f_{1}\Big)\\
			&= \min_{j=1, \ldots m} \frac{ R(A_{j}) } {\assoc_{S}(A_{j})} \assoc_{S}^{L}(f) 
	\end{align*}	
	The third step follows from the fact that $f$ is non-negative ($f_{1} \ge 0$) and ordered in ascending order, i.e., $f_{i+1} - f_{i} \ge 0, \forall i=1,\dots,m-1$.
	Since $\assoc_{S}^{L}(f)$ is non-negative, the final step implies that
	\begin{align}\label{eq:optThres}
		 \frac{R^{L}(f)} {\assoc^{L}_{S}(f)} \ge \min_{j=1, \ldots m} \frac{ R(A_{j}) } {\assoc_{S}(A_{j})}  .
	\end{align}
	Thus we have 
	\begin{align*}
		\underset{f \in \R_{+}^{\Vp}} \min 
				\frac
				 %{  \vol_{g}^{L}(f) + \nu_{S} \max_{i}\{ f_i \} + \pen^{L}(f)}
				 {R^{L}(f)}
				 { \assoc_{S}^{L}(f) } 		
				%&\ge 
				%\underset{f \in \R_{+}^{V}} \min 
				%\underset{\substack{A_{i} =\left\{ j \in V | f_j \geq f_i\right\},\\ {i = 1, \ldots, n}}}\min
				%\frac
				 %{R(A_{i})}
				 %{ \assoc_{S}(A_{i}) } 		 \\
				 &\ge
				 \min_{A \subset \Vp} \frac{R(A)}
				 { \assoc_{S}(A) } 		.
	 \end{align*}
		 
	 From inequality (\ref{eq:optThres}), it follows that optimal thresholding of $f^{*}$ yields a set that 
	 is a minimizer of problem (\ref{eq:teamProbUncstr}).
\end{proof}
\end{theorem}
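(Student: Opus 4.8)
The plan is to reduce the claim to two facts: that \eqref{eq:teamProbUncstr} and \eqref{eq:teamProbCont} have the same optimal value, and that optimal thresholding of a continuous minimizer produces a discrete minimizer. First I would bundle the numerator into a single set function $R(A) := \vol_{g}(A) + \nu_{S}\unit(A) + \gamma\pen(A)$, which satisfies $R(\emptyset)=0$ and $R(A)\ge 0$ for all $A$, so Definition~\ref{def:lovasz} applies; since the Lovász extension is linear in the underlying set function, $R^{L} = \vol_{g}^{L} + \nu_{S}\unit^{L} + \gamma\pen^{L}$, and hence the objective of \eqref{eq:teamProbCont} is exactly $R^{L}(f)/\assoc_{S}^{L}(f)$, with $\unit^{L}$ evaluated at $f$. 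For the denominator I would read $\assoc_{S}$ as the set function $\vol_{d}(A)-\cut(A,\Vp\setminus A)+\mu_{S}\unit(A)+\vol_{d^{S}}(A)$, which vanishes on $\emptyset$ and, as already observed, equals $\assoc(A\cup S)$, hence is nonnegative on every nonempty $A$; this nonnegativity is the property that makes the argument work. The inequality ``continuous infimum $\le$ discrete minimum'' is then immediate: each indicator $\ones_{A}$ lies in $\R_{+}^{\Vp}$ and, by the extension property, $R^{L}(\ones_{A})=R(A)$ and $\assoc_{S}^{L}(\ones_{A})=\assoc_{S}(A)$.

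For the reverse inequality I would establish the optimal thresholding estimate \eqref{eq:optThres}: given any $f\in\R_{+}^{\Vp}$, sorted (WLOG) in ascending order, with threshold sets $A_{1}=\Vp\supseteq A_{2}\supseteq\cdots\supseteq A_{m}$ from \eqref{eq:sets}, expand both $R^{L}(f)$ and $\assoc_{S}^{L}(f)$ via Definition~\ref{def:lovasz}; they carry the same nonnegative coefficients $f_{i+1}-f_{i}\ge 0$ and $f_{1}\ge 0$. Rewriting $R^{L}(f)=\sum_{i=1}^{m-1}\frac{R(A_{i+1})}{\assoc_{S}(A_{i+1})}\,\assoc_{S}(A_{i+1})(f_{i+1}-f_{i})+\frac{R(\Vp)}{\assoc_{S}(\Vp)}\,\assoc_{S}(\Vp)f_{1}$ and bounding each ratio below by $\min_{j}R(A_{j})/\assoc_{S}(A_{j})$ --- valid because the weights $\assoc_{S}(A_{i+1})(f_{i+1}-f_{i})$ and $\assoc_{S}(\Vp)f_{1}$ are nonnegative and sum to $\assoc_{S}^{L}(f)$ --- yields $R^{L}(f)\ge\big(\min_{j}R(A_{j})/\assoc_{S}(A_{j})\big)\assoc_{S}^{L}(f)$, which is \eqref{eq:optThres}. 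Taking the infimum over $f$ gives the continuous infimum $\ge$ the discrete minimum, so the optimal values agree; applying \eqref{eq:optThres} to a minimizer $f^{*}$ shows that its best threshold set attains the common value, i.e.\ the optimal-thresholding claim.

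The only step needing genuine care is the denominator. I must check $\assoc_{S}^{L}(f)\ge 0$ so the final division is legitimate; restrict $\min_{j}R(A_{j})/\assoc_{S}(A_{j})$ to indices $j$ with $\assoc_{S}(A_{j})>0$, noting that an index with $\assoc_{S}(A_{i+1})=0$ contributes the nonnegative amount $R(A_{i+1})(f_{i+1}-f_{i})$ to the numerator of $R^{L}(f)$ and $0$ to $\assoc_{S}^{L}(f)$, which only helps; and dispose of the degenerate case $\assoc_{S}^{L}(f)=0$ exactly as in Theorem~\ref{thm:teamProbUncstr} --- there is a feasible set $A_{0}$ with $\assoc_{S}(A_{0})>0$, so the infimum is finite and not approached along such $f$. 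The remaining ingredients --- linearity and the extension property of Lovász extensions, and the explicit telescoping forms of $\unit^{L}$ and $\pen^{L}$ (deferred to Section~\ref{sec:algo}) --- are routine.
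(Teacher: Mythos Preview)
Your proposal is correct and follows essentially the same route as the paper: define $R(A)=\vol_g(A)+\nu_S\unit(A)+\gamma\pen(A)$, get the easy inequality from the extension property on indicators, and get the reverse inequality by expanding $R^L(f)$ via Definition~\ref{def:lovasz}, multiplying and dividing each summand by $\assoc_S(A_{i+1})$, and bounding each ratio below by $\min_j R(A_j)/\assoc_S(A_j)$ so that the remaining sum collapses to $\assoc_S^L(f)$. Your extra care about indices with $\assoc_S(A_{i+1})=0$ and the degenerate case $\assoc_S^L(f)=0$ is not in the paper's proof but is a welcome addition of rigor, not a different argument.
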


\begin{corollary}
	The team formation problem (\ref{eq:teamProbSim}) is equivalent to the problem (\ref{eq:teamProbCont}) 
	if $\gamma$ is chosen according to the condition given in Theorem \ref{thm:teamProbUncstr}.
\begin{proof}
	This directly follows from Theorems \ref{thm:teamProbUncstr} and \ref{thm:teamProbCont}.
\end{proof}
\end{corollary}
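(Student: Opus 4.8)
The plan is to obtain the statement by simply chaining together the two equivalences already established in Theorems~\ref{thm:teamProbUncstr} and~\ref{thm:teamProbCont}; the entire argument is thus an observation about transitivity of problem equivalence together with a compatibility check on the admissible ranges of~$\gamma$.

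First I would observe that any $\gamma$ meeting the lower bound of Theorem~\ref{thm:teamProbUncstr} is in particular strictly positive, hence satisfies the hypothesis $\gamma \ge 0$ of Theorem~\ref{thm:teamProbCont}; so for such a $\gamma$ both theorems apply simultaneously. By Theorem~\ref{thm:teamProbUncstr}, problem~\eqref{eq:teamProbSim} is equivalent to the unconstrained discrete problem~\eqref{eq:teamProbUncstr}: maximizing the generalized density subject to the skill and distance constraints amounts to minimizing the penalized reciprocal objective, every minimizer of~\eqref{eq:teamProbUncstr} is feasible for~\eqref{eq:teamProbSim}, and the optimal value of~\eqref{eq:teamProbUncstr} is the reciprocal of the optimal value of~\eqref{eq:teamProbSim}. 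By Theorem~\ref{thm:teamProbCont}, problem~\eqref{eq:teamProbUncstr} is in turn equivalent to the continuous problem~\eqref{eq:teamProbCont}: the two optimal values coincide, and optimal thresholding of any minimizer $f^{*}$ of~\eqref{eq:teamProbCont} produces a set $A^{*}$ that is a minimizer of~\eqref{eq:teamProbUncstr}.

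Putting these two links together gives the claim: starting from a minimizer $f^{*}$ of~\eqref{eq:teamProbCont}, optimal thresholding yields a set $A^{*}$ optimal for~\eqref{eq:teamProbUncstr}, which by the first equivalence is feasible for~\eqref{eq:teamProbSim} and attains its maximal density; conversely the optimal value of~\eqref{eq:teamProbCont} equals that of~\eqref{eq:teamProbUncstr}, which equals the reciprocal of the optimal value of~\eqref{eq:teamProbSim}. Hence~\eqref{eq:teamProbSim} and~\eqref{eq:teamProbCont} are equivalent for the stated choice of~$\gamma$. I do not anticipate any genuine obstacle here; the only points requiring care are the bookkeeping of the reciprocal relationship between the maximization form~\eqref{eq:teamProbSim} and the two minimization forms, and checking that the ``$\emptyset \neq A$'' and $\unit(\cdot)$ technicalities introduced for the Lov\'asz extension are already harmonized across the two theorem statements, so that the chained equivalence is literal rather than only valid up to the empty set.
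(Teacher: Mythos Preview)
Your proposal is correct and follows exactly the paper's approach: the paper's own proof is the single line ``This directly follows from Theorems~\ref{thm:teamProbUncstr} and~\ref{thm:teamProbCont},'' and you have simply made the chaining of the two equivalences (and the compatibility of the $\gamma$ ranges) explicit.
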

% Note that the equivalent continuous problem (\ref{eq:teamProbCont}) is given in terms of 
% the Lovasz extension, $\pen^{L}(f)$. 
%% we require the Lovasz extensions for both of these functions. 
% Unfortunately it is hard to write down a closed form of the Lovasz extension for this case.
% However as we will see in Section \ref{sec:algo},  in order to 
% solve the problem (\ref{eq:teamProbCont}) we need only a subgradient of $\pen^{L}(f)$, which is easier to derive. 
 
 While the continuous problem is as hard as the original discrete problem, recent ideas from continuous optimization \cite{HeiSet11} allow us to derive in the next section an algorithm for obtaining locally optimal solutions very efficiently.
 
  \subsection{Algorithm for the Continuous Problem}  \label{sec:algo}
  We now describe an algorithm for (approximately) solving the continuous optimization problem (\ref{eq:teamProbCont}).
  The idea is to make use of the fact that the fractional optimization problem (\ref{eq:teamProbCont}) has a special structure: as we will show in this section, it can be written as a special ratio of difference of convex (d.c.) functions, i.e.\, it has the form
    \begin{align}\label{eq:fracSet}
 	\min_{f \in \R_{+}^{V}} \frac {R_{1}(f) - R_{2} (f)}{S_{1}(f)  -S_{2}(f)} := Q(f),
 \end{align}
 where the functions $R_{1}, R_{2}, S_{1}$ and $S_{2}$ are positively one-homogeneous convex functions\footnote{A function $f$ is said to be positively one-homogeneous if $f(\alpha x) = \alpha f(x), \alpha \ge 0$.} and numerator and denominator are nonnegative. 
 This reformulation then allows us to use a recent first order method called RatioDCA \cite{HeiSet11, BueRanHeiSet13}.
  
  In order to find the explicit form of the convex functions, we first need to rewrite the penalty term as $\pen(A)=\pen_1(A)-\pen_2(A)$, where 
  \begin{align*}
  	 \pen_1\hskip-0.04cm(A)  & =  \textstyle{\sum_{j=1}^p} \hskip-0.1cm \vol_{M_j} (A) + \textstyle{\sum_{j=1}^p} k_j \unit(A),\\
  	 \pen_2\hskip-0.04cm(A)  & =   \textstyle{\sum_{j=1}^p} \hskip-0.1cm \min\{l_j, \hskip-0.05cm\vol_{M_{j}}\hskip-0.05cm(A) \}  
  	\hskip-0.1cm + \hskip-0.1cm\sum_{j=1}^p \hskip-0.1cm\min\{k_j,\hskip-0.05cm \vol_{M_{j}}\hskip-0.05cm(A) \} 
  	  \hskip-0.1cm \\
  	  & -   \textstyle{\sum_{u,v \in A}} \hskip-0.08cm  \max\{0,\ \dist(u, v) - d_{0} \}.
  \end{align*}
 Using this decomposition of $\pen(A)$, we can now write down the functions $R_{1}, R_{2}, S_{1}$ and $S_{2}$ as   
  \begin{align*}%\label{eq:ratio}
% 	R_{1}(f) &=  \vol_{g}^{L}(f) + \gamma \sum_{j=1}^p \vol_{M_j}^{L}(f) \\
% 	& + \nu_{S} \max_{i}\{ f_i \} +  \gamma \sum_{j=1}^p k_j \max_i{f_i}\\
    R_{1}(f) & = \vol_{\rho}^{L}(f) + \sigma \max_i\{f_i\}\\	
	R_{2}(f) &=  \gamma \pen_2^{L}(f)  \\
	S_{1}(f) &= \vol_{d}^{L}(f) + \vol_{d^{S}}^{L} (f) + \mu_{S} \max_{i}\{ f_i \} \\
	S_{2}(f) &= \cut^{L}(f). 
 \end{align*}
 where $\rho:= g + \gamma \sum_{j=1}^p M_j$, $\sigma:= \nu_S + \gamma \sum_{j=1}^p k_j$, $\pen_2^L(f)$ denotes the Lovasz extension of $\pen_2(A)$, and
  \begin{align*}
 \vol^{L}_{h}(f) &= \inner{(h_i)_{i=1}^m, f}, \text{where } h\in \R^n,\\% \hspace{0.5cm}
% 	\unit^{L}(f)  = \max_i{f_i} , \\ 
 	\cut^{L}(f) & = \textstyle{\frac{1}{2} \sum_{i, j =1}^{m} w_{ij} \abs{f_{i} - f_{j}}}.  
%	 \assoc^{L}_{S}(f)&= \vol^{L}_{d}(f) - \cut^{L}(f)+ \vol^{L}_{d^S}(f) + \mu_{S} \unit^L{f}. % \max_{i}\{ f_i \} .
	%\assoc^{L}_{S}(f)&= \inner{(d_i)_{i=1}^m, f} - \cut^{L}(f)+ \inner{(d^{S}_i)_{i=1}^m,f} + \mu_{S} \max_{i}\{ f_i \}
 \end{align*}
 \begin{lemma}
 Using the functions $R_{1}, R_{2}, S_{1}$ and $S_{2}$ defined above, the problem (\ref{eq:teamProbCont}) can be rewritten in the form \eqref{eq:fracSet}. The functions  $R_{1}, R_{2}, S_{1}$ and $S_{2}$ are  convex and positively one-homogeneous, and $R_{1}-R_{2}$ and $S_{1}-S_{2}$ are nonnegative.
 \end{lemma}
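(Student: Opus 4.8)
The plan is to exploit the \emph{linearity} of the Lovasz extension, i.e.\ that $(R+R')^L = R^L + (R')^L$ and $(cR)^L = cR^L$ for any scalar $c$, which is immediate from Definition~\ref{def:lovasz}. Together with three elementary computations — for a modular set function $\vol_h(A)=\sum_{i\in A}h_i$ one has $\vol_h^L(f)=\inner{h,f}$; for $\unit$ one has $\unit^L(f)=\max_i f_i$ (telescoping in the defining sum); and for the cut function $A\mapsto\cut(A,\Vp\setminus A)$ one has $\cut^L(f)=\frac12\sum_{i,j}w_{ij}\abs{f_i-f_j}$ — this reduces the identity part of the statement to bookkeeping plus a few sign checks.

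First I would rewrite the denominator: by linearity, $\assoc_S^L(f) = \vol_d^L(f) - \cut^L(f) + \mu_S\unit^L(f) + \vol_{d^S}^L(f)$, and substituting the three computations turns the right-hand side into exactly $S_1(f)-S_2(f)$. For the numerator I would use $\pen = \pen_1 - \pen_2$, so $\pen^L = \pen_1^L - \pen_2^L$; since $\pen_1(A) = \sum_j \vol_{M_j}(A) + \big(\sum_j k_j\big)\unit(A)$ is modular plus a multiple of $\unit$, this gives $\pen_1^L(f) = \sum_j\inner{M_j,f} + \big(\sum_j k_j\big)\max_i f_i$. Adding $\vol_g^L(f) + \nu_S\unit^L(f) = \inner{g,f} + \nu_S\max_i f_i$ and collecting the linear part into $\rho = g+\gamma\sum_j M_j$ and the $\max$-part into $\sigma = \nu_S + \gamma\sum_j k_j$ shows the numerator equals $\vol_\rho^L(f) + \sigma\max_i f_i - \gamma\pen_2^L(f) = R_1(f) - R_2(f)$. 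Hence (\ref{eq:teamProbCont}) has the form (\ref{eq:fracSet}).

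Next I would verify the structural properties. Positive one-homogeneity is automatic: every Lovasz extension is positively one-homogeneous, as are $f\mapsto\max_i f_i$ and $f\mapsto\frac12\sum w_{ij}\abs{f_i-f_j}$, so all of $R_1,R_2,S_1,S_2$ are. For convexity, $R_1$ and $S_1$ are each a linear functional plus a \emph{nonnegative} multiple of the convex function $\max_i f_i$ — here one uses $\mu_S=\assoc(S)\ge0$ and $\sigma\ge0$, the latter because $\nu_S\ge0$, $\gamma\ge0$ and (after discarding any vacuous lower bound) $k_j\ge0$ — hence convex; and $S_2=\cut^L$ is a nonnegative combination of the convex functions $\abs{f_i-f_j}$, hence convex. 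The step I expect to be the main obstacle is the convexity of $R_2=\gamma\pen_2^L$: by Lovasz's characterization this is equivalent to submodularity of $\pen_2$, which I would check termwise — each truncation $A\mapsto\min\{l_j,\vol_{M_j}(A)\}$ (resp.\ with $k_j$) is a concave transform of the nonnegative modular function $\vol_{M_j}$ and therefore submodular, while $A\mapsto\sum_{u,v\in A}\max\{0,\dist(u,v)-d_0\}$ has nondecreasing marginal gains (by symmetry of $\dist$) and therefore supermodular, so its negative is submodular, and a sum of submodular functions is submodular.

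Finally, nonnegativity of numerator and denominator I would argue on $\R_+^m$ directly from the undecomposed expressions. The numerator $R_1-R_2=\vol_g^L(f)+\nu_S\unit^L(f)+\gamma\pen^L(f)$ is a sum of Lovasz extensions of \emph{nonnegative} set functions ($\vol_g$ with $g>0$, $\nu_S\unit$ with $\nu_S\ge0$, and $\pen\ge0$ by construction), and the Lovasz extension of a nonnegative set function is nonnegative on $\R_+^m$ since every coefficient $f_1$ and $f_{i+1}-f_i$ in the defining sum is $\ge0$. Likewise the denominator $S_1-S_2=\assoc_S^L(f)$ is the Lovasz extension of $\assoc_S$, which satisfies $\assoc_S(A)=\assoc(A\cup S)\ge0$ for $A\neq\emptyset$ and $\assoc_S(\emptyset)=0$, hence is a nonnegative set function and its extension is nonnegative on $\R_+^m$. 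This completes the plan.
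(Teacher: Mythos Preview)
Your proposal is correct and follows essentially the same route as the paper: linearity of the Lovasz extension to split numerator and denominator, convexity of $R_1,S_1$ as nonnegative combinations of linear and $\max$ terms, convexity of $S_2=\cut^L$ directly, convexity of $R_2$ via submodularity of $\pen_2$, and nonnegativity from the nonnegativity of the underlying set functions. The only noticeable difference is in the distance term of $\pen_2$: you argue supermodularity of $A\mapsto\sum_{u,v\in A}D_{uv}$ by the increasing-marginals criterion, whereas the paper instead writes this term as $-\vol_{d_D}(A)+\cut_D(A,\Vp\setminus A)$ and reads off convexity of its Lovasz extension directly; both arguments are valid and equally short.
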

 \begin{proof}
The denominator of  \eqref{eq:teamProbCont} is given as
  $
  \assoc^{L}_{S}(f)= \vol^{L}_{d}(f) - \cut^{L}(f)+ \vol^{L}_{d^S}(f) + \mu_{S} \unit^L(f), 
  $
  and the numerator is given as
  $
  \vol_{g}^{L}(f) + \nu_{S} \unit^L(A) %\max_{i}\{ f_i \} 
				 + \gamma \pen^{L}(f) .
  $
  Using Prop.2.1 in \cite{Bach11} and the decomposition of $\pen(A)$ introduced earlier in this section, we can decompose $\pen^{L}(f) = \pen_1^{L}(f) - \pen_2^{L}(f)$. The Lovasz extension of $\pen_1(A)$ is given as
  $
  \pen_1^L(f) = \sum_{j=1}^p \vol_{M_j}^{L}(f) + \sum_{j=1}^p k_j \max_i\{f_i\},
  $
  and let $\pen_2^L(f)$ denote the Lovasz extension of $\pen_2(A)$ (an explicit form is not necessary, as shown later in this section). The equality between \eqref{eq:teamProbCont} and \eqref{eq:fracSet} then follows by simple rearranging of the terms.
  
  %What is left to show is that the functions $R_{1}, R_{2}, S_{1}$ and $S_{2}$ are positively one-homogeneous convex functions, and the differences $R_{1}- R_{2}$ and $S_{1} -S_{2}$ and nonnegative.
 The nonnegativity of the functions $R_{1}- R_{2}$ and $S_{1} -S_{2}$ follows from the nonnegativity of denominator and numerator of \eqref{eq:teamProbCont} and the definition of the Lovasz extension.
  Moreover, the Lovasz extensions of any set function is positively one-homogeneous \cite{Bach11}. 
  
 Finally, the convexity of $R_1$ and $S_1$ follows as they are a non-negative combination of the convex functions $\max_i\{f_i\}$ and $\inner{(h_i)_{i=1}^m,f}$ for some $h\in \R^n$. The function $S_{2}(f)=\cut^L(f)$ is well-known to be convex \cite{Bach11}. To show the convexity of $R_{2}$, we will show that the function $\pen_2(A)$ is submodular\footnote{A set function $R:2^V \rightarrow \R$ is submodular if for all $A,B \subset V$,
%\begin{equation} \label{def_submod}
	$R(A\cup B) + R(A \cap B)  \leq R(A) + R(B).$ }. %\ .
%\end{equation}}
The convexity then follows from the fact that a set function is submodular if and only if its Lovasz extension is convex \cite{Bach11}.
For the proof of the submodularity of the first two sums one uses the fact that the pointwise minimum of a constant and a increasing submodular function is again submodular. Writing $D_{uv}:= \max\{0,\ \dist(u, v) - d_{0} \}$, the last sum can be written as 
$
 - \sum_{u,v\in A} D_{uv} = - \sum_{u\in A,v\in \Vp} D_{uv}
 + \sum_{u\in A,v\in \Vp \backslash A} D_{uv}
$.
Using $(d_D)_i=\sum_j D_{ij}$, we can write its Lovasz extension as
$
   - \vol_{d_D}(f) + 
   \frac{1}{2} \sum_{i,j\in \Vp} D_{ij}\abs{f_i-f_j}, 
$
which is a sum of a linear term and a convex term.
 \end{proof}

The reformulation of the problem in the form \eqref{eq:fracSet} enables us to apply a modification of the recently proposed RatioDCA \cite{HeiSet11, BueRanHeiSet13}, a method for the 
  \emph{local} minimization of objectives of the form \eqref{eq:fracSet} on the whole $\R^m$. 
%  Note that our problem requires a modification of their algorithm since we need to minimize $Q(f)$ on the positive orthant $\R^{m}_{+}$.  
% The steps of the modified RatioDCA algorithm are given below. 
% The only difference to RatioDCA is the new constraint $u \in \R_{+}^{m}$ added to 
% the convex problem (line 3). 
%The details of the calculation are given in appendix \ref{sec:subgradient}.
\floatname{algorithm}{}
\begin{algorithm}[htb]
   \renewcommand{\thealgorithm}{}
   \caption{\textbf{RatioDCA \cite{HeiSet11}} Minimization of a non-negative ratio of one-homogeneous d.c functions over $\R^m_+$}
   \label{alg:ratio_dc}
\begin{algorithmic}[1]
   \STATE {\bfseries Initialization:} $f^0 \in \R^m_+$, 
   $\lambda^0 = Q(f^0)$%(R_1(f^0)-R_2(f^0))/(S_{1}(f^0) - S_{2}(f^{0}))$
   \REPEAT
   \STATE  $f^{l+1} =\hspace{-4mm}
   	 \argmin_{u \in \R^{m}_{+},\, \norm{u}_2 \leq 1}\hspace{-3.5mm} R_1(u) +\lambda^l S_2(u)- \inner{u, r_2(f^l)+ \lambda^l s_1(f^l)}$\\
   	  %\hspace{2cm}$\left. 
%   	  -  \inner{u, s_{1}(f^l)}    \right\}$ \\
   	 	\text{where\ } $r_2(f^l) \in \partial R_2(f^l)$, $s_1(f^l) \in \partial S_1(f ^l)$
   \STATE $\lambda^{l+1}= Q(f^{l+1})$%(R_1(f^{l+1}) - R_2(f^{l+1}))/(S_{1}(f^{l+1}) - S_{2}(f^{l+1}))$
	\UNTIL $\frac{\abs{\lambda^{l+1}-\lambda^l}}{\lambda^l}< \epsilon$
\end{algorithmic}
\end{algorithm} 
Given an initialization $f_{0}$, the above algorithm solves a sequence of convex optimization problems (line 3).
%\paragraph{Solution of the convex problem (line 3)}
%\textbf{Solution of the convex problem (line 3). }
%Note that for the convex problem (line 3) 
Note that we do not need an explicit description of the terms $S_1(f)$ and $R_2(f)$, but only elements of their sudifferential $s_1(f) \in \partial S_1(f)$ resp.\, $r_2(f) \in \partial R_2(f)$. The explicit forms of the subgradients are given in the appendix. 
%Note that in order to solve the convex problem (line 3) we  
% need only a subgradient of $R_{2}$, i.e., of $-\pen^{L}(f)$ at the current iterate 
% but not an explicit description of $-\pen^{L}(f)$. 
% %and nowhere does the algorithm require explicit description of the Lovasz extension of $R_{2}$.
% Although finding $-\pen^{L}(f)$ is difficult, we can easily compute its subgradient as required by the algorithm. 
% Moreover to compute $Q$ (in line 4) we can use the property $-\pen^{L}(f) = \inner{r_{2}(f),  f}$, 
% where $r_{2}(f)$ is the subgradient of $-\pen^{L}(f)$.
%  The details of the calculation of $r_{2}(f)$ are given in appendix \ref{sec:subgradient}.
%%The main step (line 3) in the algorithm requires to solve a convex optimization problem, called here the {\it inner problem}, 
%%which can be written as,
% The algorithm also requires a subgradient of $S_{1}(f)$ which is easy to compute and is given by 
% \[	s_{1}(f) = d + d^{S} + \mu_{S} I^{1}(f), 
% \]
%where $I^{1}(f)$ is the indicator function of the largest entry of $f$.
The convex problem (line 3) then has the form
%\begin{align}\label{eq:inner}
%\min_{u \in \R^{n}_{+}} \vol_{g}^{L}(u) + \nu_{S} \max\{ u \} +
%				\frac{\lambda^{l}}{2} \cut^{L}(u) -
%				\inner{u, r_{2}(f^{l}) + s_{1}(f^{l})} 
%\end{align}
\begin{align}\label{eq:inner}			
\min_{f \in \R^{m}_{+}}  \frac{\lambda^{l}}{2} \sum_{i, j =1}^{m} w_{ij} \abs{f_{i} - f_{j}} + 
				\inner{f, c}
				+ \sigma\ \max_{i} \{f_{i} \},
\end{align}				 
where $c = \rho -  r_{2}(f^{l}) - \lambda^{l} s_{1}(f^{l})$. 
Note that \eqref{eq:inner} is a \textit{non-smooth} problem. However,  there exists an equivalent smooth dual problem, which we give below. 
%\paragraph{Solution of the inner problem via FISTA}
\begin{lemma}\label{lem:inner_problem}
%The inner problem \eqref{eq:IP} is equivalent to %the problem
The problem \eqref{eq:inner} is equivalent to
\begin{small}
\begin{align*}
\min_{\substack{\norm{\alpha}_\infty\leq 1\\ \alpha_{ij}=-\alpha_{ji}}} 
	 \min_{v \in S_m} 
		\frac{1}{2}\norm{P_{\R_+^m} 
		\left(\hskip-0.1cm -c -\frac{\lambda^l}{2} A \alpha - \sigma v \hskip-0.1cm \right)}_2^2 \hskip-0.1cm , %:= \Psi(\alpha,v),
\end{align*}
\end{small}
where $A:\R^E \mapsto \R^V$ with $(A\alpha)_i := \sum_{j} w_{ij} (\alpha_{ij}-\alpha_{ji})$, $P_{\R_+^m}$ denotes the projection on the positive orthant
and $S_m$ is the simplex $S_m=\{v \in \R^m \,|\, v_i\geq 0, \sum_{i=1}^m v_i=1\}$.
\end{lemma}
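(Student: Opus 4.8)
The plan is to dualize the two non-smooth terms of \eqref{eq:inner} and then exchange a minimum with a maximum. First I would record the variational identities
\[
\frac{\lambda^l}{2}\sum_{i,j=1}^m w_{ij}\abs{f_i-f_j} = \max_{\norm{\alpha}_\infty\le 1,\ \alpha_{ij}=-\alpha_{ji}} \frac{\lambda^l}{2}\inner{f, A\alpha},
\qquad
\max_i\{f_i\} = \max_{v\in S_m}\inner{v,f}.
\]
The second is the standard fact that a linear functional on the simplex is maximized at a vertex. For the first, a one-line computation using only the symmetry of $W$ gives $\inner{f,A\alpha} = \sum_{i,j} w_{ij}\alpha_{ij}(f_i-f_j)$; grouping the ordered pairs $(i,j)$ and $(j,i)$, using the antisymmetry $\alpha_{ij}=-\alpha_{ji}$ and maximizing each $\alpha_{ij}\in[-1,1]$ separately (recall $w_{ij}\ge 0$) reproduces $\frac12\sum_{i,j}w_{ij}\abs{f_i-f_j}$.

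Substituting these identities into \eqref{eq:inner} and using that a sum of maxima over independent variables is the maximum of the sum, the inner problem becomes the saddle-point problem
\[
\min_{f\in\R^m_+,\ \norm{f}_2\le 1}\ \max_{\substack{\norm{\alpha}_\infty\le 1\\ \alpha_{ij}=-\alpha_{ji}}}\ \max_{v\in S_m}\ \inner{f,\ c + \tfrac{\lambda^l}{2}A\alpha + \sigma v},
\]
where I keep the constraint $\norm{f}_2\le 1$ that \eqref{eq:inner} inherits from line 3 of RatioDCA (it is needed for boundedness, since the objective is positively one-homogeneous). The objective is bilinear, hence convex in $f$ and concave (indeed linear) in $(\alpha,v)$, and all three feasible sets are convex and compact, so Sion's minimax theorem lets me interchange $\min_f$ with $\max_{\alpha,v}$ with no duality gap.

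After the interchange I am left, for each fixed $(\alpha,v)$, with $\min_{f\in\R^m_+,\,\norm{f}_2\le 1}\inner{f,z}$ where $z = z_{\alpha,v} := c + \frac{\lambda^l}{2}A\alpha + \sigma v$. Here I would use the elementary identity $\max_{f\in\R^m_+,\,\norm{f}_2\le 1}\inner{f,y} = \norm{P_{\R_+^m}(y)}_2$, which follows from $\inner{f,y}\le\inner{f,P_{\R_+^m}(y)}\le\norm{f}_2\norm{P_{\R_+^m}(y)}_2$ for $f\ge 0$, with equality at $f = P_{\R_+^m}(y)/\norm{P_{\R_+^m}(y)}_2$ when the projection is nonzero and both sides equal $0$ otherwise. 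Applied to $y=-z$ this yields $\min_{f\in\R^m_+,\,\norm{f}_2\le 1}\inner{f,z} = -\norm{P_{\R_+^m}(-z)}_2$, so the optimal value of \eqref{eq:inner} equals $-\min_{\alpha,v}\norm{P_{\R_+^m}(-c-\frac{\lambda^l}{2}A\alpha-\sigma v)}_2$. Since $t\mapsto\frac12 t^2$ is strictly increasing on $[0,\infty)$, this last minimization and $\min_{\alpha,v}\frac12\norm{P_{\R_+^m}(-c-\frac{\lambda^l}{2}A\alpha-\sigma v)}_2^2$ have the same set of minimizers over the same feasible set, which is exactly the smooth problem in the statement. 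A primal minimizer of \eqref{eq:inner} is then recovered from a dual minimizer $(\alpha^*,v^*)$ via $f^{l+1} = P_{\R_+^m}(-z_{\alpha^*,v^*})/\norm{P_{\R_+^m}(-z_{\alpha^*,v^*})}_2$ (and $f^{l+1}=0$ in the degenerate case that this projection vanishes, where the optimal value is $0$).

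The main obstacle, such as it is, will be the min--max interchange: one must verify compactness of the $\alpha$-box, the simplex $S_m$ and the truncated ball, together with the convex--concave (bilinear) structure, so that Sion's theorem applies and strong duality — not merely weak duality — holds; everything else is routine convex analysis, namely the two support-function identities and the projection formula. A secondary point to state carefully is the degenerate case $P_{\R_+^m}(-z^*)=0$, in which the inner problem attains value $0$ and $f=0$ (among others) is optimal.
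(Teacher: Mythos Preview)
Your argument is correct but follows a somewhat different line than the paper's. The paper first removes the norm constraint by the standard homogeneity trick: since the objective of \eqref{eq:inner} is positively one-homogeneous, the constrained problem $\min_{u\in\R^m_+,\,\norm{u}_2\le 1}\phi(u)$ is replaced by the unconstrained problem $\min_{u\in\R^m_+}\phi(u)+\tfrac{1}{2}\norm{u}_2^2$ (same minimizing direction, different radius). After inserting the same two variational identities you use and swapping $\min$ and $\max$, the inner minimization over $u$ is now a strongly convex quadratic on $\R^m_+$ with closed-form solution $u=P_{\R^m_+}(-c-\tfrac{\lambda^l}{2}A\alpha-\sigma v)$; plugging back in and using $\langle P_{\R^m_+}(x),x\rangle=\norm{P_{\R^m_+}(x)}_2^2$ yields the squared norm in the statement directly.

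The trade-offs: the paper's route makes the squared objective appear naturally and gives the primal recovery $u^*=P_{\R^m_+}(-z^*)$ without a normalization step, which is convenient for the FISTA implementation that follows. Your route keeps the ball constraint, obtains the unsquared norm $\norm{P_{\R^m_+}(-z)}_2$, and then needs the additional monotonicity observation to pass to $\tfrac{1}{2}\norm{\cdot}_2^2$; this is perfectly valid (same minimizers over $(\alpha,v)$) but adds a step. On the other hand, you are more careful than the paper about justifying the min--max swap via Sion's theorem and about the degenerate case $P_{\R^m_+}(-z^*)=0$, both of which the paper glosses over.
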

\begin{proof}
First we use the homogenity of the objective in  the inner problem to eliminate the norm constraint. This yields
the equivalent problem
\begin{align*}%\label{eq:IP2} 
\min_{u\in \R^n_+}\sigma \max_i u_i  +\frac{1}{2}\norm{u}_2^2 + \inner{u,c}  + \frac{\lambda^l}{2}\hspace*{-0.1cm}\sum_{i,j=1}^n w_{ij}|u_i-u_j|.
\end{align*}
We derive the dual problem as follows:
\begin{align*}
		 & \min_{u\in \R_+^n}\hspace*{-0.1cm}\frac{\lambda^l}{2}\hspace*{-0.1cm}\sum_{i,j=1}^{n} w_{ij} \abs{u_{i} - u_{j}} + \sigma \max u_i + \frac{1}{2} \norm{u}_2^2+ \inner{u,c}  \\
		 & = \min_{u\in \R_+^n} \Big\{\max_{\substack{\norm{\alpha}_\infty\leq 1\\ \alpha_{ij}=-\alpha_{ji}}}\frac{\lambda^l}{2}\sum_{i,j=1}^{n} w_{ij} \left(u_{i} - u_{j}\right) \alpha_{ij}    \\
		 & \hspace{1.5cm}
		   +   \max_{v \in S_n} \sigma \inner{u,v} + \frac{1}{2} \norm{u}_2^2 + \inner{u,c} \Big\}  \\
	%	 & = \hspace{-0.2cm}\max_{\substack{\norm{\alpha}_\infty\leq 1\\ \alpha_{ij}=-\alpha_{ji}}} \hspace{-0.1cm} \max_{\substack{v_i\ge 0\\ \sum_iv_i=1}} 
	%	 		  \hspace{-0.2cm}\min_{u\in \R_+^n} \frac{1}{2} \norm{u}_2^2\hspace{-0.1cm} -\hspace{-0.1cm} \inner{u, C- \frac{ \lambda_{k}}{2} A\alpha - \gamma\  k_1 v} ,\\
		 & = \max_{\substack{\norm{\alpha}_\infty\leq 1 \\ \alpha_{ij}=-\alpha_{ji} \\ v \in S_n}} 
		 	\min_{u\in \R_+^n} \frac{1}{2} \norm{u}_2^2 + \inner{u, c + \frac{ \lambda^l}{2} A\alpha + \sigma v},		  
\end{align*}
where $(A\alpha)_i := \sum_{j} w_{ij} (\alpha_{ij}-\alpha_{ji})$.
The optimization over $u$ has the solution
$
	u= P_{\R_+^n} ( -c  - \frac{\lambda^l}{2} A \alpha - \sigma v).
$
Plugging $u$ into the objective and using that 
$\langle P_{\R_+^n}(x),x \rangle = \|P_{\R_+^n}(x)\|_2^2$, we obtain the result.
\end{proof}

%\left\langle#1\right\rangle
The smooth dual problem can be solved very efficiently using recent scalable first order methods like FISTA \cite{BT09}, 
which has a guaranteed convergence rate of $O(\frac{1}{k^{2}})$, where $k$ is the number of steps done in FISTA.
%and  $L$ is the  Lipschitz constant of the gradient of the objective. 
The main part in the calculation of FISTA consists of a matrix-vector multiplication. As the social network is typically sparse, this 
operation costs $O(m)$, where $m$ is the number of non-zeros of $W$. 

RatioDCA \cite{HeiSet11}, produces a strictly decreasing sequence $f^{l}$, i.e., $Q(f^{l+1}) < Q(f^{l})$, or terminates. 
%and converges in a finite number of steps to a critical point. 
This is a typical property of fast local methods in non-convex
optimization. 
Moreover, the convex problem need not be solved to full accuracy; we can terminate the convex problem early, if the current $f^{l}$ produces already sufficent descent in $Q$. As the number of required steps in the 
RatioDCA typically ranges between 5-20, the full method scales to large networks.
Note that convergence to the global optimum of (\ref{eq:fracSet}) cannot be guaranteed 
due to the non-convex nature of the problem.
However, we have the following quality guarantee for the team formation problem.

\begin{theorem}\label{th:quality-guarantee} 
Let $A_{0}$ be a feasible set for the problem \eqref{eq:teamProbSim} and 
$\gamma$ is chosen as in Theorem \ref{thm:teamProbUncstr}.
%$\gamma > \widehat{Q}(C_{0}) \max_{C \subset V} \hat{S(C)}/\theta$. 
Let $f^*$ denote the result of RatioDCA after initializing with the vector $\ones_{A_{0}}$, and let $A_{f^*}$ denote the set found by optimal thresholding of $f^*$. 
%Then $\hat{Q}_\gamma(A) \geq \hat{Q}_\gamma(C_f)$. 
%Moreover, if $A$ is feasible, (i.e. $k_1 \leq \abs{A} \leq k_2)$, and 
%If $\gamma > \hat{Q}(A) \max_{C \subset V} \hat{S(C)}$, then also $C_{f^*}$ is feasible and it holds that 
Either RatioDCA terminates after one iteration, or produces $A_{f^*}$ which satisfies 
all the constraints of the team formation problem (\ref{eq:teamProbSim}) and 
%\[ \frac{\hat{R}(C_{f^*})}{\hat{S}(C_{f^*})} < \frac{\hat{R}(A)}{\hat{S}(A)}.\]
\[ 
	%\frac{\sum_{i, j \in C_{f^{*}}} w_{ij}} {\sum_{i \in C_{f^{*}}} r_{i}} > \frac{\sum_{i, j \in C_{0}} w_{ij}} {\sum_{i \in C_{0}} r_{i}}.
	%\density(A_{f^{*}}) > \density(A_{0})
	\frac{ \assoc_{S}(A_{f^{*}}) } {\vol_{g}(A_{f^{*}}) + \nu_{S} } > \frac{ \assoc_{S}(A_{0}) } {\vol_{g}(A_{0}) + \nu_{S} } .
 \]
 \begin{proof}
 	RatioDCA generates a decreasing sequence $\{f^{l}\}$ such that $Q(f^{l+1}) < Q(f^{l})$ until it terminates \cite{HeiSet11}. 
	%or terminates in one step.
	We have $Q(f^{1}) < Q(\ones_{A_{0}})$, if the algorithm does not stop in one step.
	As shown in Theorem (\ref{thm:teamProbCont}) optimal thresholding of $f^{1}$ yields a 
	set $A_{f}$ that achieves smaller objective on the corresponding set function. 
	Since the chosen value of $\gamma$ guarantees the satisfaction of the constraints,  $A_f$ has to be feasible.
 \end{proof}
\end{theorem}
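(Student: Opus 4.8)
The plan is to chain together three facts that are already available in the excerpt: the strict descent property of RatioDCA from its initialization, the optimal-thresholding inequality \eqref{eq:optThres} proved inside Theorem~\ref{thm:teamProbCont}, and the exact-penalty estimate from the proof of Theorem~\ref{thm:teamProbUncstr}. Throughout I would write $R(A) = \vol_g(A) + \nu_S\,\unit(A) + \gamma\,\pen(A)$, so that on $\R_+^m$ the RatioDCA objective is $Q(f) = R^L(f)/\assoc_S^L(f)$, and recall that $R^L$ and $\assoc_S^L$ restrict to $R$ and $\assoc_S$ on indicator vectors. The dichotomy in the statement is handled first: if RatioDCA stops after one iteration there is nothing to prove, so assume it does not. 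Then by \cite{HeiSet11} the iterates satisfy $Q(f^{l+1}) < Q(f^l)$, and since $f^0 = \ones_{A_0}$ we get $Q(f^*) \le Q(f^1) < Q(\ones_{A_0}) = R(A_0)/\assoc_S(A_0)$. Feasibility of $A_0$ forces $\pen(A_0)=0$ and $A_0\neq\emptyset$, hence $R(A_0)=\vol_g(A_0)+\nu_S$ and $\assoc_S(A_0)>0$ (the latter being part of the hypothesis on $\gamma$), so
\[ Q(f^*) \;<\; \frac{\vol_g(A_0)+\nu_S}{\assoc_S(A_0)} . \]

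Next I would threshold the final iterate. By \eqref{eq:optThres}, optimal thresholding of $f^*$ returns a set $A_{f^*}$ — necessarily nonempty, being one of the sets $A_i=\{j\in\Vp : f^*_j\ge f^*_i\}$, each of which contains $i$ — with $R(A_{f^*})/\assoc_S(A_{f^*}) \le Q(f^*)$. Combining with the previous display,
\[ \frac{R(A_{f^*})}{\assoc_S(A_{f^*})} \;<\; \frac{\vol_g(A_0)+\nu_S}{\assoc_S(A_0)} . \]
Here $\assoc_S(A_{f^*})>0$ because the left-hand side is finite (it is bounded above by the finite number $Q(\ones_{A_0})$), and $\vol_g(A_{f^*})>0$ because $A_{f^*}\neq\emptyset$ and every vertex weight $g_i$ is strictly positive, so the ratio is well defined and positive.

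Finally I would establish feasibility by the same contradiction argument used in Theorem~\ref{thm:teamProbUncstr}. Suppose $A_{f^*}$ violates some constraint of \eqref{eq:teamProbSim}; then $\pen(A_{f^*})\ge\theta>0$, and, bounding the numerator below using $\vol_g(A_{f^*})>0$, $\nu_S\ge 0$ and the denominator above by $\vol_d(V)$ exactly as in that proof,
\[ \frac{R(A_{f^*})}{\assoc_S(A_{f^*})} \;>\; \frac{\gamma\,\pen(A_{f^*})}{\assoc_S(A_{f^*})} \;\ge\; \frac{\gamma\,\theta}{\vol_d(V)} \;>\; \frac{\vol_g(A_0)+\nu_S}{\assoc_S(A_0)}, \]
which contradicts the display above. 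Hence $A_{f^*}$ satisfies all constraints of \eqref{eq:teamProbSim}, so $\pen(A_{f^*})=0$ and $R(A_{f^*})=\vol_g(A_{f^*})+\nu_S$; substituting this back gives $(\vol_g(A_{f^*})+\nu_S)/\assoc_S(A_{f^*}) < (\vol_g(A_0)+\nu_S)/\assoc_S(A_0)$, and inverting the two positive quantities yields the claimed strict inequality.

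I do not expect a genuine obstacle here: the argument is short and each step reuses a result already proved. The only thing requiring care is the positivity bookkeeping — $\assoc_S(A_0)>0$ from the hypothesis on $\gamma$, $\assoc_S(A_{f^*})>0$ from finiteness of $Q$ along the decreasing sequence, and $A_{f^*}\neq\emptyset$ from the structure of optimal thresholding — so that every division and the final reciprocal are legitimate, together with keeping the trivial ``terminates after one iteration'' branch cleanly separated from the descent branch.
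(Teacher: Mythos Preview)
Your proposal is correct and follows essentially the same route as the paper: strict descent of RatioDCA from $\ones_{A_0}$, the optimal-thresholding inequality \eqref{eq:optThres}, and the exact-penalty argument of Theorem~\ref{thm:teamProbUncstr} to force feasibility. You are simply more explicit than the paper in the positivity/well-definedness bookkeeping and in spelling out the contradiction for feasibility, which is all to the good.
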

%\paragraph{Special case, $R$ convex, $S$ concave:}
%%In the special case where $R$ is convex and $S$ is concave (which corresponds to $\hat{R}$ submodular and $\hat{S}$ supermodular), the algorithm can be seen as a variant of 
%%%boils down to 
%%Dinkelbach's method from fractional programming \cite{Din1967}. We then obtain a monotonically decreasing sequence that is guaranteed to converge to the global optimum \cite{Din1967}.
%In this case, one can use Dinkelbach's method from fractional programming \cite{Din1967}. Let $H=\{x \in \R^n_+ \,|\, \norm{x}_\infty \leq1\}$.
%At each step one solves
%\[ \phi(\lambda):=\min_{f \in H} R(f) - \lambda\, S(f),\]
%which is a convex problem. It is easy to see that $\phi(\lambda)=0$ if and only if $\lambda \leq \min_{f \in \R^n_+} Q(f)$, and
%$\phi(\lambda)<0$ implies $\lambda > \min_{f \in \R^n_+} Q(f)$. These properties can be used to search for the optimal value. 
%One starts with a feasible $f^l$ and defines $\lambda^l=Q(f^l)$. Then 
%\[ f^{l+1}:=\argmin_{f \in H} R(f)-\lambda^l\,S(f).\]
%One can check that $\lambda^{l+1} := Q(f^{l+1}) < \lambda^l$ or $\lambda^{l+1}=\lambda^l$ which implies global optimality.
%The monotonically decreasing sequence is guaranteed to converge to the global optimum \cite{Din1967}. For graph-based problems the strict inequality together with the thresholding property guarantees 
%a minimum descent for each step which using $Q(f)\geq 0$ yields convergence in a finite number of steps.

\section{LP relaxation of GDSP}\label{sec:LPrel}	
	Recall that our team formation problem based on the density objective
	is rewritten as the following GDSP after integrating the subset constraint:
	    \begin{align}\label{eq:teamProbSim2}
	\underset{A \subseteq \Vc^{\prime}} \max &\; 
		 \frac{ \assoc_{S}(A) } %\vol_{d^{\prime}}(A) - \cut(A, \overline{A})+ 2\ \vol_{d^{S}} (A) + \mu_{S} } 
		 {\vol_{g}(A) + \nu_{S}}  \\%{\vol(C)}\\
	\subj: &\; k_{j}  \leq  \vol_{M_{j}}(A)  \leq l_{j},\quad \forall j \in \{1, \ldots, p\} \nonumber \\	         
		&\; \dist(u, v) \le d_{0},\quad \forall u, v \in A \nonumber	
		% &\; D(u, v) \ones_{u \in A} \ones_{v \in A} = 0, \forall u, v \in \Vc^{\prime} \nonumber
		  %&\; D_{uv}  = 0, \forall u, v \in A \nonumber
 \end{align}
 	Note that here we do not require the additional constraint, $A \neq \emptyset$, that we added to \eqref{eq:teamProbSim}.
% where $D_{uv}:= \max\{0,\ \dist(u, v) - d_{0} \}$.
 	In this section we show that there exists a Linear programming (LP) relaxation for this problem. 
	The LP relaxation can be solved optimally in polynomial time and provides 
	an upper bound on the optimum value of GDSP.
	In practice such an upper bound is useful to check the quality of the solutions found
	by approximation algorithms. \\
	
	\begin{theorem}
	The following LP is a relaxation of the Generalized Densest Subgraph Problem (\ref{eq:teamProbSim2}).
		\begin{align}\label{eq:LP}
		%\underset{y \in \{0,1\}^{\abs{\Vc}}} \max &\; 
		\underset{t \in \R,\  f \in \R^{V^{\prime}},\ \alpha \in \R^{E^{\prime}}}  \max &\; 
		  \sum_{i,j=1}^{m} w_{ij} \alpha_{ij} + 2\ \inner{d^{S}, f} + t \mu_{S}\\		 \nonumber
		\subj: &\; tk_{j}  \leq  \inner{M_{j}, f}  \leq tl_{j},\quad \forall j \in \{1, \ldots, p\} \\	     \nonumber
		 &\; f_{u} + f_{v} \le t,\quad \forall u, v : \dist(u, v) > d_{0}\\ \nonumber
		 &\; t \ge 0, \quad \alpha_{ij} \le f_{i},\ \alpha_{ij} \le f_{j}, \ \forall (i,j) \in E^{\prime}\\ \nonumber
		 &\; 0 \le  f_{i} \le t, \ \forall i \in \Vc^{\prime},\
		  \alpha_{ij} \ge  0, \  \forall (i, j) \in E^{\prime}\\ \nonumber
		 &\; {\inner{g, f} + t\nu_{S}}  = 1. \nonumber
	 \end{align}
	where $V^{\prime} = V \backslash S$,
	 $E^{\prime}$ is the set of edges induced by $V^{\prime}$.% and $D_{uv}:= \max\{0,\ \dist(u, v) - d_{0} \}$.
	\begin{proof}
%	Let $y \in \R^{{V^{\prime}}}$ and $X \in \R^{{E^{ \prime}}}$.
%		Using the substitution $y = \ones_{A}$, $X_{ij} = \min\{ y_{i}, y_{j }\}$, for $i < j$, we can rewrite GDSP as 
%	Then the problem (\ref{eq:teamProbSim2}) can be rewritten as 
	The following problem is equivalent to (\ref{eq:teamProbSim2}), because (i) for every feasible set $A$ of \eqref{eq:teamProbSim2}, 
	there exist corresponding feasible $y,\ X$ given by $y= \ones_{A}$, $ X_{ij} = \min\{ y_{i}, y_{j}\}$, with the same 
	objective value and (ii) an optimal solution of the following problem always satisfies $X^{*}_{ij} = \min \{y^{*}_{i}, y^{*}_{j}\}$.
		    \begin{align*}
	%\underset{y \in \R^{\Vc}} \max &\; 
%		\underset{y \in \{0,1\}^{\abs{\Vc}}} \max &\; 
		%\underset{y \in \R^{m},\ X \in \R^{\abs{E^{\prime}}}}  \max &\; 		
		\underset{y \in \{0,\ 1\}^{V^{\prime}},\ X \in \{0,\ 1\}^{E^{\prime}}}  \max &\; 
		 \frac{ 2\sum_{i<j} w_{ij} X_{ij} + 2\ \inner{d^{S}, y} + \mu_{S}}
		 {\inner{g, y} + \nu_{S}}  \\%{\vol(C)}\\
	\subj: &\; k_{j}  \leq  \inner{M_{j}, y}  \leq l_{j},\quad \forall j \in \{1, \ldots, p\} \\	         
		 &\; y_{u} + y_{v} \le 1,\quad \forall u, v : \dist(u, v) > d_{0}\\ \nonumber
		&\; X_{ij} \le y_{i},\quad X_{ij} \le y_{j}, \quad \forall (i,j) \in E^{\prime}
%		&\; 0 \le y_{i} \le 1, \ \forall i \in \Vc^{\prime},\
%		 0 \le X_{ij} \le1, \  \forall (i, j) \in E^{\prime}
%		 &\; y_{i} \in \{ 0, 1 \}, \ \forall i \in \Vc^{\prime},\
%		 X_{ij} \in \{ 0, 1\}, \  \forall (i, j) \in E^{\prime}
		  \end{align*}

	Relaxing the integrality constraints and using the substitution, 
	$ X_{ij} = \frac{\alpha_{ij} }{t}$ and $y_{i} = \frac{f_{i}}{t}$, 
	we obtain the relaxation:
	\begin{align*}
		%\underset{y \in \{0,1\}^{\abs{\Vc}}} \max &\; 
		\underset{t\in \R,\ f \in \R^{V^{\prime}},\  \alpha \in \R^{E^{\prime}}}  \max &\; 
		 \frac{ 		2 \sum_{i<j} w_{ij} \alpha_{ij} + 2\ \inner{d^{S}, f} + t \mu_{S}}
		 {\inner{g, f} + t\nu_{S}}  \\%{\vol(C)}\\
		\subj: &\; tk_{j}  \leq  \inner{M_{j}, f}  \leq tl_{j},\quad \forall j \in \{1, \ldots, p\} \\	         
		 &\; f_{u} + f_{v} \le t,\quad \forall u, v : \dist(u, v) > d_{0}\\ \nonumber
		 &\; t \ge 0, \quad \alpha_{ij} \le f_{i},\ \alpha_{ij} \le f_{j}, \ \forall (i,j) \in E^{\prime}\\
		 &\; 0 \le  f_{i} \le t, \ \forall i \in \Vc^{\prime},\
		  \alpha_{ij} \ge  0, \  \forall (i, j) \in E^{\prime}
		 %&\; {\inner{g, f} + t\nu_{S}}  = 1.
%		\subj: &\; tk_{j}  \leq  \inner{M_{j}, y}  \leq tl_{j},\quad \forall j \in \{1, \ldots, p+2\} \\	         
%		 &\; D(u, v)\ X_{uv} = 0, \forall u, v \in \Vc \\
%		 &\; t \ge 0, \quad X_{ij} \le y_{i},\ X_{ij} \le y_{j}, \ \forall (i,j) \in E(\Vc)
	 \end{align*}
	
	Since this problem is invariant under scaling, we can fix the  scale 
	by setting the denominator to 1, which yields the equivalent LP stated in the theorem.
%	Since this problem is invariant under scaling, we can fix the scale 
%	by setting the denominator to 1, which completes the proof.
	% yields the equivalent LP stated in the theorem.
	\end{proof}
	\end{theorem}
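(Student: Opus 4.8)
The plan is to reach \eqref{eq:LP} from \eqref{eq:teamProbSim2} by one \emph{exact} reformulation, then a relaxation of the integrality constraints, and finally a standard homogenization that turns the resulting fractional program into a linear one.

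\textbf{Step 1 (linearize the numerator; exact reformulation).} Using the identity $\assoc_S(A) = \vol_d(A) - \cut(A,\Vp\setminus A) + \mu_S + \vol_{d^S}(A)$ from the excerpt together with $\vol_d(A) - \cut(A,\Vp\setminus A) = \assoc(A) + \vol_{d^S}(A)$ (split the degree of a vertex of $A$ into the parts going to $A$, to $\Vp\setminus A$, and to $S$), one obtains, for $y=\mathbf{1}_A$, that $\assoc_S(A) = \assoc(A) + 2\vol_{d^S}(A) + \mu_S = 2\sum_{i<j} w_{ij}\,y_i y_j + 2\langle d^S,y\rangle + \mu_S$. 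This suggests introducing edge variables $X_{ij}$ standing for the products $y_iy_j$, and considering the $0/1$ fractional program $\max \frac{2\sum_{i<j}w_{ij}X_{ij} + 2\langle d^S,y\rangle + \mu_S}{\langle g,y\rangle + \nu_S}$ over $y\in\{0,1\}^{\Vp}$, $X\in\{0,1\}^{E^{\prime}}$, subject to $k_j\le\langle M_j,y\rangle\le l_j$, $X_{ij}\le y_i$, $X_{ij}\le y_j$, and $y_u+y_v\le 1$ for every pair $u,v$ with $\dist(u,v)>d_0$ (which encodes $\dist(u,v)\le d_0$ for all $u,v\in A$). I claim this is equivalent to \eqref{eq:teamProbSim2}: given a feasible $A$, the point $y=\mathbf{1}_A$, $X_{ij}=\min\{y_i,y_j\}$ is feasible with the same objective value; conversely, at any optimizer of the $(y,X)$ program nonnegativity of the edge weights forces $X_{ij}$ up to its upper bound $\min\{y_i,y_j\}$ (on edges with $w_{ij}=0$ we may take $X_{ij}=\min\{y_i,y_j\}$ without loss), so with $y$ binary we recover $X_{ij}=\mathbf{1}_A(i)\mathbf{1}_A(j)$ for $A=\{i:y_i=1\}$, which is feasible for \eqref{eq:teamProbSim2}.

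\textbf{Step 2 (relax integrality and homogenize).} Dropping the integrality constraints, i.e. replacing $y\in\{0,1\}^{\Vp}$ by $0\le y_i\le 1$ and $X\in\{0,1\}^{E^{\prime}}$ by $\alpha_{ij}\ge 0$ (with $\alpha_{ij}\le y_i\le 1$ still implied), only enlarges the feasible set, hence gives a valid relaxation, though still fractional. To linearize, introduce a scaling variable $t\ge 0$ and substitute $y_i=f_i/t$, $X_{ij}=\alpha_{ij}/t$: the skill bounds become $tk_j\le\langle M_j,f\rangle\le tl_j$, the distance constraints become $f_u+f_v\le t$, the box constraints become $0\le f_i\le t$, the coupling becomes $\alpha_{ij}\le f_i$, $\alpha_{ij}\le f_j$, $\alpha_{ij}\ge 0$, and the objective becomes $\frac{2\sum_{i<j}w_{ij}\alpha_{ij} + 2\langle d^S,f\rangle + t\mu_S}{\langle g,f\rangle + t\nu_S}$, which is invariant under common positive scaling of $(t,f,\alpha)$. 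Using that invariance, fix the scale by $\langle g,f\rangle + t\nu_S=1$; the objective reduces to the linear functional $\sum_{i,j=1}^m w_{ij}\alpha_{ij} + 2\langle d^S,f\rangle + t\mu_S$ (writing $\sum_{i,j}=2\sum_{i<j}$ after accounting for symmetric double counting), and all remaining constraints are linear. This is exactly \eqref{eq:LP}. Since Step 1 is an equivalence, dropping integrality is a relaxation, and the substitution and scale-fixing only rewrite the problem, the LP optimum is an upper bound on the GDSP optimum.

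\textbf{Main obstacle.} The delicate point is the exact-reformulation argument of Step 1: cleanly establishing that an optimal $(y,X)$ must satisfy $X_{ij}=\min\{y_i,y_j\}$ (needing $w_{ij}\ge 0$ and a without-loss-of-generality choice on edges with $w_{ij}=0$), and verifying the numerator identity with the correct factor of $2$ on $\vol_{d^S}$ via the degree splitting. A secondary technicality is to check that restricting to $t>0$ in Step 2 is harmless — the degenerate face $t=0$ cannot improve on the supremum (e.g. the feasible empty team corresponds to a genuine point with $t>0$) — so that fixing the denominator to $1$ is legitimate. Everything else is routine substitution and bookkeeping.
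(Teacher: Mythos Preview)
Your proposal is correct and follows essentially the same approach as the paper: an exact integer reformulation with edge variables $X_{ij}$ (justified by $w_{ij}\ge 0$ forcing $X^*_{ij}=\min\{y^*_i,y^*_j\}$ at optimality), followed by dropping integrality and the standard Charnes--Cooper homogenization that fixes the denominator to $1$. Your write-up is in fact slightly more careful than the paper's own proof, since you spell out the numerator identity $\assoc_S(A)=\assoc(A)+2\vol_{d^S}(A)+\mu_S$ via degree splitting and flag the $t>0$ technicality, both of which the paper leaves implicit.
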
	
 	 	
	Note that the solution $f^{*}$ of the LP (\ref{eq:LP}) is, in general, not integral, i.e.,
	$f^{*} \notin \{0, 1\}^{V^{\prime}}$.
	One can use  standard techniques of randomized rounding or optimal thresholding 
	to derive an integral solution from $f^{*}$. 
	However, the resulting integral solution may not necessarily give a subset that satisfies
	 the constraints of \eqref{eq:teamProbSim2}. In the special case when there are only lower bound constraints, 
	 i.e., problem (\ref{eq:teamProbLb}), one can obtain a feasible set $A$ for problem (\ref{eq:teamProbLb})
	  by thresholding $f^{*}$  (see \eqref{eq:sets}) according to the objective of \eqref{eq:teamProbSim2}. 
	 This is possible in this special case because there is always a threshold $f^{*}_{i}$ which yields 
	 a non-empty subset $A_{i}$ (in the worst case the full set $V^{\prime}$) satisfying all the lower bound constraints. 
	 In our experiments on problem \eqref{eq:teamProbLb}, we derived a feasible set from the solution of LP in this fashion
	 by choosing the threshold that yields a subset that satisfies the constraints and has the highest objective value.	 
	 
	 Note that the LP relaxation (\ref{eq:LP}) is vacuous with respect to 
	 upper bound constraints % when $S$ is empty (i.e., $\nu_{S} = \mu_{S} = 0$) 
	 in the sense  that given $f \in \R^{m}$ that does not satisfy 
	 the upper bound constraints of the LP (\ref{eq:LP}) one can construct $\tilde{f}$, feasible for the LP %(\ref{eq:LP})
	 by rescaling $f$ without changing the objective of the LP. This implies that one can always transform the solution of the unconstrained problem into a feasible solution
	 when there are $only$ upper bound constraints.
	 However, in the presence of lower bound or subset constraints, such 
 	 a rescaling does not yield a feasible solution and hence the LP relaxation is useful
	 on the instances of \eqref{eq:teamProbSim2} with at least one lower bound or a subset constraint (i.e., $\nu_{S} > 0$). 
%	 Hence the LP relaxation (\ref{eq:LP}) is useful for the special case (\ref{eq:teamProbLb}) which has only lower bound 	constraints.
%	 For this reason, we use the LP relaxation only for the experiments done on the special case (\ref{eq:teamProbLb}) of the problem.
%	 We emphasize here that that the rationale behind deriving this LP relaxation is only to verify the quality 
%	 of the solutions of our method \FORTE\ against the greedy method of \cite{GajSar12} which is designed only 
%	 for the special case (\ref{eq:teamProbLb}).
	 
%	 This in particular means that 	 
%	 when there are only upper bound constraints the unconstrained solution is also optimal for the
%	 constrained problem.
%	Note that the optimal solution of LP is real-valued and does not correspond 
%	A standard technique to transform the relaxed solution to a
%	is to use either randomized rounding or optimal thresholding. 
%	However, in our case, both processes do not guarantee that the constraints are satisfied. 
%	Randomized rounding, a process where a real valued solution is randomly rounded to an integer,  
%	is a standard technique in integer programming to extract feasible points from the relaxed solutions. 
%	
%	One of the standard technique in transforming 
%	The standard techniques to 
%	However, in the special case, when GDSP has only lower bound constraints, 
%	a feasible set can be constructed from the optimal solution of LP.
%

  \section{Experiments}
% In this section we evaluate the performance of our method, FORTE, using 
% the scientific collaboration network extracted from the DBLP server. 
% Since the real world datasets for the team formation problem are not publicly 
% available yet, we  evaluate the performance of our method, FORTE, using 
% the scientific collaboration network extracted from the DBLP server. 
%  We created a new dataset 
%  from the DBLP data obtained on 9th of July, 2012. 
  
  We now empirically show that \FORTE\; consistently produces high quality 
  compact teams. 
%  In particular, the teams found by \FORTE\; always have higher densities than those of the
%  greedy method \cite{GajSar12} and yet are smaller in size.
%  The LP relaxation that we derived in Section \ref{sec:LPrel} further shows 
%  the worst case optimality gap for our method.
%  Using the LP relaxation that we derived in Section \ref{sec:LPrel} we further show that the densities achieved by \FORTE \ are 
%  not far away from the true optimum values and on some instances are very close to the optimum values.
  We also show that the quality guarantee given by Theorem \ref{th:quality-guarantee} is  
  useful in practice as our method 
  often improves a given sub-optimal solution.
%  Moreover, although disconnectedness is an inherent downside of density based objective, 
%  our experiments confirm that the greedy method has a further bias towards producing 
%  disconnected teams while a connected team with a higher density exists. \\
  
  \subsection{Experimental Setup}
   Since we are not aware of any publicly available real world datasets for the team formation problem, we use, as in \cite{GajSar12}, a scientific collaboration network extracted from the DBLP database. %(on 9th of July, 2012). 
  Similar to \cite{GajSar12}, we restrict ourselves to
   four fields 
  of computer science: Databases (\DB), Theory (\Th), Data Mining (\DM), Artificial Intelligence (\AI). 
  Conferences that we consider for each field are given as follows:
 {\small DB = \{SIGMOD, VLDB, ICDE, ICDT, PODS\}, 
	 T = \{SODA, FOCS, STOC, STACS, ICALP, ESA\}, 
	 DM = \{WWW, KDD, SDM, PKDD, ICDM, WSDM\}, 
	 AI = \{IJCAI, NIPS, ICML, COLT, UAI, CVPR\}}. 	 
 
 For our team formation problem, the skill set is given by $\A = $\{\DB, \Th, \DM, \AI\}. 
 Any author who has at least three publications in any of the above 23 conferences is 
 considered to be an expert.
 In our DBLP co-author graph, a vertex corresponds to an expert and 
 an edge between two experts indicates prior collaboration between them. 
 The weight of the edge is the number of shared publications. 
 % from these experts by connecting 
% The set of experts $V$ is given by the set of authors who have at least 
% three publications in any of the above 23 conferences. The weight $w_{ij}$ of the edge between experts 
% $i$ and $j$ in the DBLP graph $G(V, W)$ is the number of common publications of $i$ and $j$. 
 Since the resulting co-author graph is disconnected, we take its largest connected component 
 (of size 9264)
 for our experiments. 
% In order to identify the main area of expertise for each expert, we consider the relative number of 
% publications in the respective areas.
%  Each expert is defined to have  a skill level of 1 in skill $j$ if he has more than 25\%  
% of his/her publications in the conferences corresponding to skill $j$.
%  We restricted the skill level to be binary valued in order to be able to compare against other methods. \\
 
% Unless otherwise stated, we initialize our method \FORTE\;  with the
% solution of the densest subgraph problem without any constraints, which can be computed in polynomial time. 
% Directly solving problem (\ref{eq:teamProbCont}) with the upper bound given in Theorem \ref{thm:teamProbUncstr} 
% for the penalty parameter $\gamma$ often yields poor results. 
% Hence in our implementation, we first solve the unconstrained version of the problem (\ref{eq:teamProbCont}) (i.e., $\gamma=0$) 
% and then iteratively solve (\ref{eq:teamProbCont}) for increasing values of $\gamma$ until all constraints are satisfied.
% In each iteration, we use the solution obtained for previous value of $\gamma$ as the starting point. 
% % Hence we first solve the unconstrained problem and then increase gamma until all the constraints are satisfied. 
% Moreover, in every iteration we rescale $\gamma$ for each constraint depending on its infeasibility at the current starting point. 
   Directly solving the non-convex problem (\ref{eq:teamProbCont}) for the value of $\gamma$ given in 
  Theorem \ref{thm:teamProbUncstr} 
  %and using the same value of $\gamma$ for all penalty terms
   often yields poor results. 
 Hence in our implementation of \FORTE\ we adopt the following strategy. 
% for finding the appropriate value of $\gamma$ and solving  \eqref{eq:teamProbSim}.
  We first solve the unconstrained version of problem (\ref{eq:teamProbCont}) (i.e., $\gamma=0$) 
 and then iteratively solve (\ref{eq:teamProbCont}) for increasing values of $\gamma$ until all constraints are satisfied.
  In each iteration, we increase $\gamma$ only for those constraints which were 
  infeasible in the previous iteration;  in this way, each penalty term is regulated by different value of $\gamma$.
  Moreover, the solution obtained in the previous iteration of $\gamma$ is used as the starting point for the current iteration. 
% When our method \FORTE\ is started independently of other methods, 
%% which we call \FORTE\ in the following, 
% the solution obtained for previous value of $\gamma$ is used as the starting point for the current iteration. 

  \subsection{Quantitative Evaluation}
\begin{figure*}[t]%\label{fig:densities}
\centering
	\includegraphics[scale=0.27]{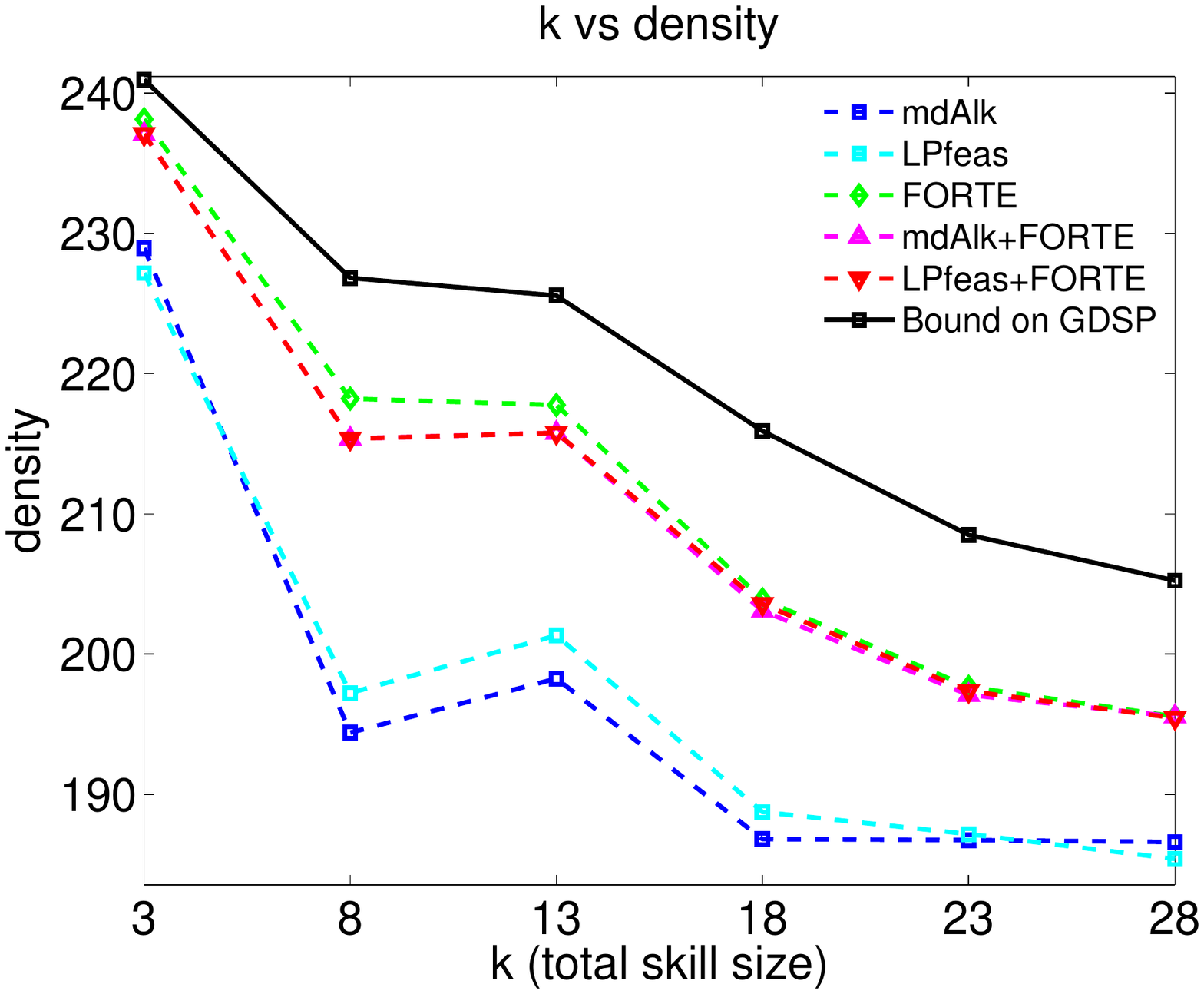}		
	\includegraphics[scale=0.27]{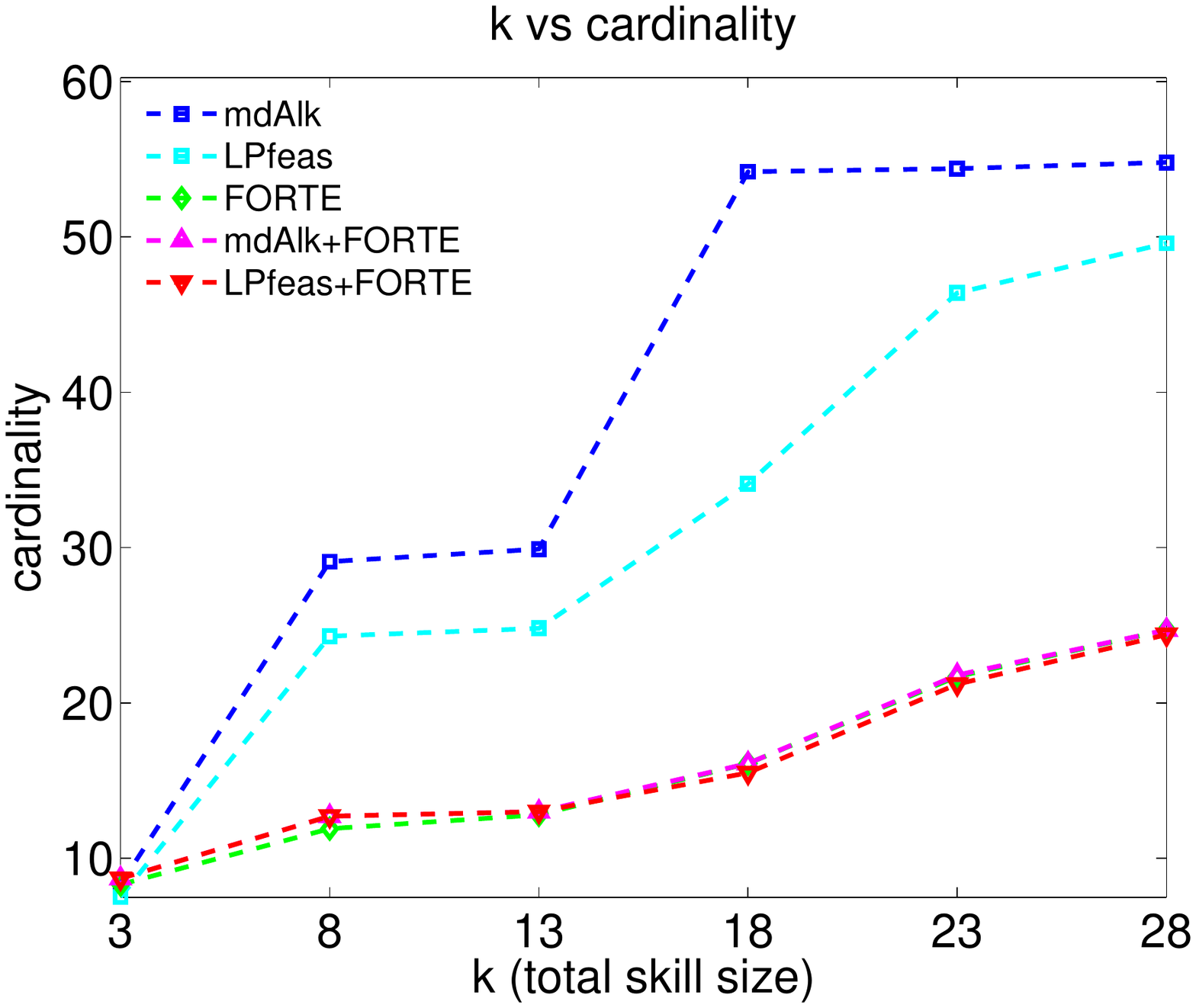}		
	\includegraphics[scale=0.27]{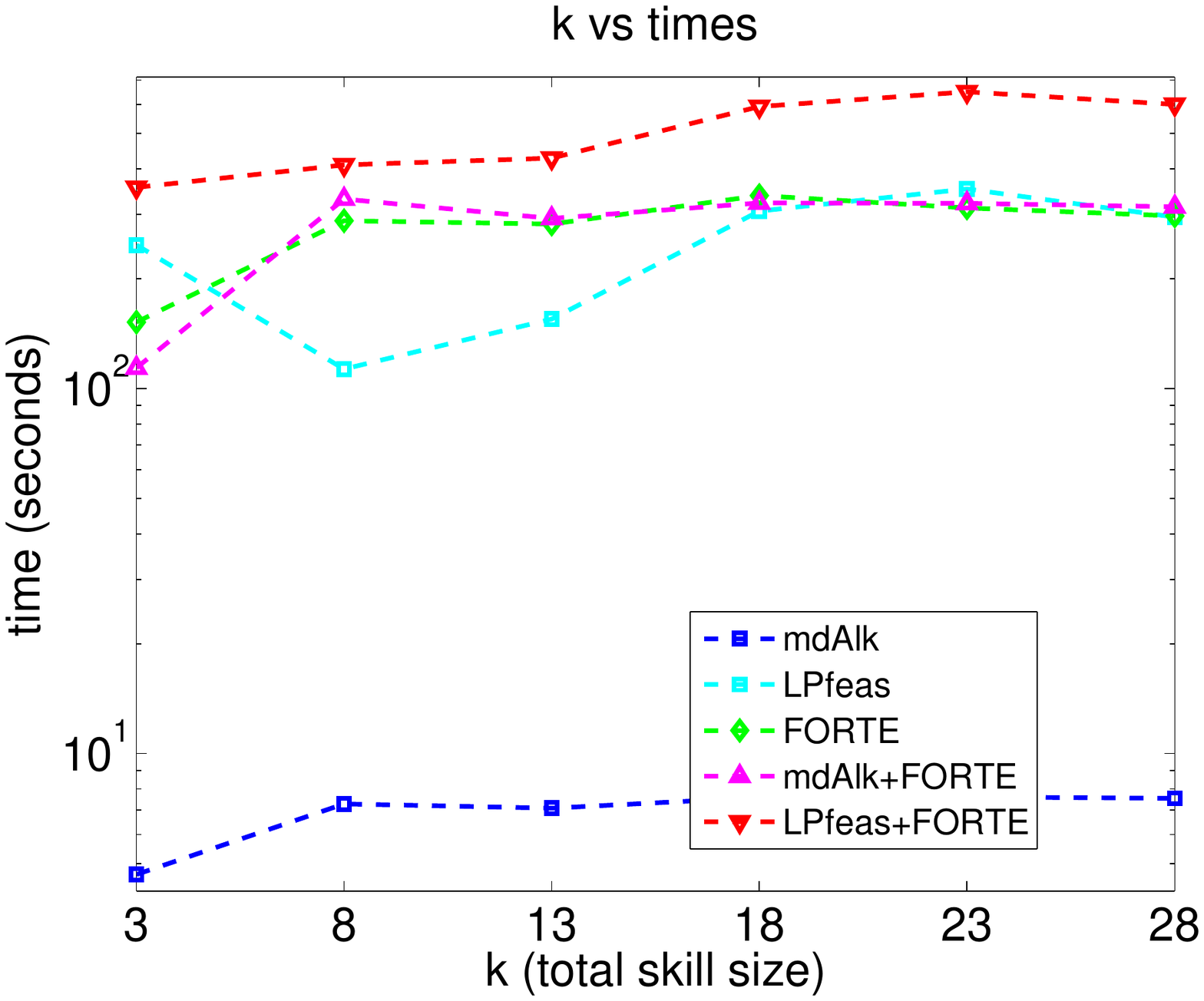}				
\label{fig:times}
\vskip -0.2cm
\caption{Densities, team sizes and runtimes of \mdalk, our method (\FORTE), a feasible point constructed from the LP  (\lpfeas), and \FORTE\; initialized with \lpfeas\; and \mdalk,  averaged over 10 trials. All versions of  (\FORTE) significantly outperform \mdalk, and LPfeas  both in terms of densities and 
sizes of the teams found. The densities of \FORTE\ are close to the upper bound on the optimum of the GDSP given by the LP.}
\end{figure*}

%\begin{figure*}%\label{fig:times}
%%	\includegraphics[scale=0.5]{expt_one_densities.png}
%%	\includegraphics[scale=0.5]{expt_one_sizes.png}
%	\includegraphics[scale=0.4]{results_expt_one_dblp_components.png}		
%	\includegraphics[scale=0.4]{results_expt_one_dblp_times.png}		
%\label{fig:densities}
%\caption{Number of connected components of the teams formed by our method (\textbf{FORTE}), $m$-densestAlk and 
%\textbf{FORTE\_quality} and the time taken. Results reported here are averages of ten trials.}	
%\end{figure*}

%	Plots for densities \\
%	Plots for team sizes \\		
%	Plots for  \#disconnected components\\
%	Plots for diameters \\
%	Plots for average ranks\\
%	Plots for \#skills vs \#disconnected components\\

% This is experiment two.\\
% 
 In this section we perform a quantitative evaluation of our method in the special case of the team formation problem with lower bound constraints and $g_i=1 \, \forall i$ (problem \eqref{eq:teamProbLb}).
  We evaluate the performance of our method against the greedy method proposed in \cite{GajSar12}, refered to as \mdalk. 
  %($m$-densest at least $k$). % \cite{GajSar12}.
% using the same setting as in \cite{GajSar12}.
 % In order to compare to \mdalk, we consider here the special case of the team formation problem with lower bound constraints and $g_i=1 \forall i$ (problem \eqref{eq:teamProbLb}).
 %In particular, we compare the methods on the team formation problem with the lower bound constraints, 
 %i.e., problem (\ref{eq:teamProbLb}), with $g_{i} = 1, \forall i$.
% all the vertex weights in these experiments are set to 1 
 Similar to the experiments of \cite{GajSar12}, an expert is defined to have a skill level of 1 in skill $j$, if he/she has a publication in any of the conferences 
 corresponding to the skill $j$.
 As done in \cite{GajSar12}, we create random tasks for different values of skill size, $k = \{3, 8, 13, 18, 23, 28 \}$.
 For each value of $k$ we sample $k$ skills with replacement from the skill set $\A$ = \{\DB, \Th, \DM, \AI\}.
 For example if $k=3$, a sample might contain \{\DB, \DB, \Th \}, which means that the random task 
 requires at least two experts from the skill \DB\ and one expert from the skill \Th. 
% Since our method \textbf{FORTE} is non-convex, we initialize the method with 2 random points  
% and choose the result that has the maximum objective. 

In Figure 1, we show for each method the densities, sizes and runtimes for the different skill sizes $k$, averaged over 10 random runs. In the first plot, we also show the optimal values of the LP relaxation in \eqref{eq:LP}.  Note that this provides an upper bound on the optimal value 
of \eqref{eq:teamProbSim}.
%and 
 %is not a feasible solution to the team formation problem.
 We can obtain  feasible solutions from the LP relaxation of \eqref{eq:teamProbSim} via thresholding (see Section \ref{sec:LPrel}),
 which are shown in the plot as \lpfeas.
 Furthermore, the plots contain the results obtained when the solutions of \lpfeas\ and \mdalk\ 
 are used as the initializations for
 \FORTE\  (in each of the $\gamma$ iteration). %(for all values of $\gamma$). 
%\mdalk+\FORTE\; and \lp+\FORTE\; are the 
%We also plot the results of \FORTE+\mdalk and \FORTE+\lp 
%Furthermore, the plots contain the results obtained when 
%% which are the results obtained when 
% \FORTE\ (in each of the $\gamma$ iteration) is started from the solution of \lpfeas\; and \mdalk.
% respectively.
 %In Figures 1 and 2, we show for each method the densities, sizes, %number of connected components
 %and runtimes, averaged over 10 different random runs. \\
 %obtained by all methods and also the time taken for each of them. 
 %Note that the results shown are the average of 10 different random runs.\\
 
 The plots show that \FORTE\ always produces teams of higher densities
 and smaller sizes  compared to \mdalk\ and \LPFeas. %the greedy method (\mdalk). 
% Moreover, \FORTE\  also produces teams of much smaller co 
  %In the first plot, we also shown the objective values of the LP relaxation in \eqref{eq:LP}.
 %Note that this provides an upper bound on the optimal value and 
 %is not a feasible solution to the team formation problem. %, since it is a real-valued solution. 
% It is guaranteed that the optimum value lies between the plots shown 
% However, we can obtain  feasible solutions from the LP relaxation via constrained optimal thresholding,  
% which are shown in the plot as ``LPfeas''.
Furthermore, %one observes that 
% Note that
  \lpfeas\; produces better results than 
 the greedy method in several cases in terms of densities and sizes of the obtained teams. %and number of compoents
 %Moreover, the results of \mdalk+\FORTE\; and \lpfeas+\FORTE\; show that, on average, 
 %\FORTE\ always improves
 % the starting solution given by \mdalk\; and \lpfeas, respectively. 
% In fact except for k=3, where apparently, no improvement possible, 
% \textbf{FORTE\_quality} improved the given solution  in 42 out of 50 cases (5 different values of $k$  and 10 trials).
% It is interesting to note that the solution obtained from random initializations is strictly better than the 
% solution obtained via $m$-densestAlk.
 The results of \mdalk+\FORTE\; and \lpfeas+\FORTE\; 
 further show that our method is able improve the sub-optimal solutions of \mdalk\ and \lpfeas\ significantly and 
 achieves almost similar results as that of \FORTE\, which was started with the unconstrained solution of \eqref{eq:teamProbCont}. %which was run independently of other methods.
   Under the worst-case assumption that the upper bound on \eqref{eq:teamProbSim} computed using the LP is the optimal value, the solution of \FORTE\; is $94\%-99\%$ optimal (depending on $k$).
\subsection{Qualitative Evaluation}
\begin{table*}[htb!]\label{tb:qualTeams}
%\usepackage{graphicx}
%\scalebox{0.7}
%\begin{tabular}{|p{2.5cm}|p{4.5cm}|p{4.5cm}|p{4.5cm}|}
%\begin{tabular}{|p{2cm}|p{7cm}|p{7cm}|}
%\hline
%Task $\T$ & Our method: teams (citeseer ranks) and component sizes & $m$-densestAlt: teams (citeseer ranks) and component sizes \\
%\textbf{ Task $\T$} & \bf{Our method}& \bf{$m$-densestAlk}\\
%\hline
%\input{table_camready.tex}
\begin{tabular}{|p{2.5cm}|p{4.5cm}|p{4.5cm}|p{4.5cm}|}
\hline
 \textbf{Task} & \bf{FORTE}& \bf{mdAlk}   & \bf{LPfeas} \\ 
\hline
 Task 1: Unconstrained 
\newline(LP bound: 32.7)& 
\#Comps: 1 (2)  Density: 32.7  AIR: 11.1
\newline 
Jiawei Han (54),
Philip S. Yu (279)
& 
\#Comps: 1 (2)  Density: 32.7  AIR: 11.1
\newline 
Jiawei Han (54),
Philip S. Yu (279)
& 
\#Comps: 1 (2)  Density: 32.7  AIR: 11.1
\newline 
Jiawei Han (54),
Philip S. Yu (279)
\\ 
\hline 
Task 2: \DB$\geq$3 
\newline(LP bound: 29.8)& 
\#Comps: 1 (3)  Density: 29.8  AIR: 7.56
\newline 
Jiawei Han (54),
Philip S. Yu (279)
\textbf{(+1)}
& 
\#Comps: 1 (3)  Density: 29.8  AIR: 7.56
\newline 
Jiawei Han (54),
Philip S. Yu (279)
\textbf{(+1)}
& 
\#Comps: 1 (3)  Density: 29.8  AIR: 7.56
\newline 
Jiawei Han (54),
Philip S. Yu (279)
\textbf{(+1)}
\\ 
\hline 
Task 3: \AI$\geq$4 
\newline(LP bound: 16.6)& 
\#Comps: 3 (1,3,3)  Density: 16.6  AIR: 10.3
\newline 
Michael I. Jordan (28),
\textit{Jiawei Han (54)},
Daphne Koller (127),
\textit{Philip S. Yu (279)},
Andrew Y. Ng (345),
Bernhard Schoelkopf (364)
\textbf{(+1)}
& 
\#Comps: 3 (1,3,3)  Density: 16.6  AIR: 10.3
\newline 
Michael I. Jordan (28),
\textit{Jiawei Han (54)},
Daphne Koller (127),
\textit{Philip S. Yu (279)},
Andrew Y. Ng (345),
Bernhard Schoelkopf (364)
\textbf{(+1)}
& 
\#Comps: 3 (1,3,3)  Density: 16.6  AIR: 10.3
\newline 
Michael I. Jordan (28),
\textit{Jiawei Han (54)},
Daphne Koller (127),
\textit{Philip S. Yu (279)},
Andrew Y. Ng (345),
Bernhard Schoelkopf (364)
\textbf{(+1)}
\\ 
\hline 
Task 4: \AI$\geq$4,
\text{$\dist_G(u,v)\leq$2},
\text{S=\{Andrew Ng\}}
\newline(LP bound: 3.91)& 
\multicolumn{2}{p{9cm}|}{
\#Comps: 1 (4)  Density: 3.89  AIR: 14.2
\newline 
Michael I. Jordan (28),
Sebastian Thrun (97),
Daphne Koller (127),
Andrew Y. Ng (345)
}
& 
\#Comps: 1 (6)  Density: 3.5  AIR: 12.5
\newline 
Michael I. Jordan (28),
Geoffrey E. Hinton (61),
Sebastian Thrun (97),
Daphne Koller (127),
Andrew Y. Ng (345),
Zoubin Ghahramani (577)
\\ 
\hline 
Task 5: \AI$\geq$4,
\text{$\dist_G(u,v)\leq$2},
\text{S=\{B.Schoelkopf\}}
\newline(LP bound: 6.11)& 
\multicolumn{3}{p{13.5cm}|}{
\#Comps: 2 (11,1)  Density: 3.54  AIR: 3.94
\newline 
\textit{Jiawei Han (54)},
\textit{Christos Faloutsos (140)},
Thomas S. Huang (146),
\textit{Philip S. Yu (279)},
\textit{Zheng Chen (308)},
Bernhard Schoelkopf (364),
\textit{Wei-Ying Ma (523)},
\textit{Ke Wang (580)}
\textbf{(+4)}
}
\\ 
\hline 
Task 6: \AI$\geq$4,
\text{$\dist_G(u,v)\leq$2},
\text{S=\{B.Schoelkopf\}},
$\sum_i c_i \leq$255 
\newline(LP bound: 2.06)& 
\multicolumn{3}{p{13.5cm}|}{
\#Comps: 1 (4)  Density: 1.24  AIR: 1.82
\newline 
Alex J. Smola (335),
Bernhard Schoelkopf (364)
\textbf{(+2)}
\medskip 
\newline 
LP+FORTE: 
\#Comps: 2 (2,2)  Density: 1.77  AIR: 2.73
\newline 
Robert E. Schapire (293),
Alex J. Smola (335),
Bernhard Schoelkopf (364),
Yoram Singer (568)
}
\\ 
\hline 
Task 7: 3$\leq$\DB$\leq$6, 
\DM$\geq$10, 
\newline(LP bound: 11.3)& 
\multicolumn{3}{p{13.5cm}|}{
\#Comps: 1 (10)  Density: 9.52  AIR: 4.96
\newline 
Haixun Wang (50),
Jiawei Han (54),
Philip S. Yu (279),
Zheng Chen (308),
Ke Wang (580)
\textbf{(+5)}
}
\\ 
\hline 
Task 8: 2$\leq$\DB$\leq$5, 
10$\leq$\DM$\leq$15, 
5$\leq$\AI$\leq$10 
\newline(LP bound: 10.7)& 
\multicolumn{3}{p{13.5cm}|}{
\#Comps: 3 (1,12,3)  Density: 7.4  AIR: 5.06
\newline 
Michael I. Jordan (28),
Jiawei Han (54),
Daphne Koller (127),
Philip S. Yu (279),
Zheng Chen (308),
Andrew Y. Ng (345),
Bernhard Schoelkopf (364),
Wei-Ying Ma (523),
Divyakant Agrawal (591)
\textbf{(+7)}
}
\\ 
\hline 
Task 9: \AI$\leq$2, 
\Th$\geq$2, 
|C|$\leq$6 
\newline (LP bound: 19)& 
\multicolumn{3}{p{13.5cm}|}{
\#Comps: 3 (2,2,2)  Density: 6.17  AIR: 1.53
\newline 
Didier Dubois (426),
Micha Sharir (447),
\textit{Divyakant Agrawal (591)},
Henri Prade (713),
Pankaj K. Agarwal (770)
\textbf{(+1)}
}
\\ 
\hline 
\end{tabular}
%\end{tabular}
\label{table:Teams}
\vskip -0.2cm
\caption{Teams formed by \FORTE, \mdalk\;and \lpfeas\; for various tasks. We list the number and sizes of the found components, the (generalized) maximum density as well as the average inverse rank (AIR) based on the Citeseer list. Finally, we give name and rank of each team member with rank at most 1000. 
Experts who do not have the skill required by the task but are still included in the team are shown in \textit{italic font}.% (very few overall). 
%Note that \mdalk\; can not be applied to the tasks in the lower half of the table.
}
\end{table*}
%\clearpage

 In this experiment, we assess the quality of the teams obtained for several tasks with different skill requirements.
Here we consider the team formation problem (\ref{eq:teamProbSim}) in its more general setting. 
%In order to show the effect of generalized density, 
We use the generalized density objective of \eqref{eq:gen_density} where 
each vertex is given a rank $r_i$, which we define based on the number of publications of the corresponding expert. 
For each skill, we rank the experts according to the number of his/her publications in the conferences corresponding to the skill.
In this way each expert gets four different rankings; the total rank of an expert is then the minimum 
of these four ranks. 
%Note that each expert gets a different rank for every skill; we choose the minimum rank among the four ranks each expert gets. 
%We define rank for each skill based on the number of publications in that skill. We use the minimum rank of expert $i$ across all skills 
%as the vertex weight of expert $i$.
The main advantage of such a ranking is that the experts that have higher skill are given preference, thus 
producing more competent teams. 
Note that we choose a relative measure like rank as the vertex weights instead of an absolute quantity like number of publications,
since the distribution of the number of publications varies between different fields. 
In practice such a ranking is always available and hence, in our opinion, should be incorporated.
%In practice, it is always desirable to use such a ranking, which is readily available. 

  Furthermore, in order to identify the main area of expertise of each expert, 
  we consider his/her relative number of  publications.
   Each expert is defined to have  a skill level of 1 in skill $j$ if he has more than 25\%  
 of his/her publications in the conferences corresponding to skill $j$.
% This additional setting allows us to assess the quality of the teams over a wide range of tasks.
 As a distance function between authors, we use the shortest path on the \textit{unweighted version} of the DBLP graph, i.e. 
 % as the distance function. 
  two experts are at a distance of two, if the shortest path between the corresponding vertices in the 
 unweighted DBLP graph contains two edges. 
 Note that in general the distance function can come from other general sources beyond the input graph,
 but here we had to rely on the graph distance because of lack of other information.

   In order to assess the \textit{competence} of the found teams, we use 
   the list of the 10000 most cited authors of Citeseer \cite{Citeseer}. Note that in contrast to the skill-based ranking discussed above, this list is only used in the evaluation and \emph{not} in the construction of the graph. We compute the average inverse rank as in \cite{GajSar12} as $AIR:= 1000\cdot\sum_{i=1}^k \frac{1}{R_i}$, where $k$ is the size of the team and $R_i$ is the rank of expert $i$ on the Citeseer list of 10000 most cited authors. For authors not contained on the list we set $R_i=10001$.
 We also report the densities of the teams found in order to assess their {\it compatibility}. 
 
  We create several tasks with various constraints and compare the teams produced by \FORTE,  
 \mdalk\ and \LPFeas\ (feasible solution derived from the LP relaxation). 
 %Note that as the \mdalk\; algorithm of $\cite{GajSar12}$ cannot directly handle vertex weights, 
 Note that 
 in our implementation we extended the \mdalk\; algorithm of $\cite{GajSar12}$ to incorporate general vertex weights, using Dinkelbach's method from fractional programming \cite{Din1967}. 
%see Appendix \ref{mAlkRanks} for more details. 
The results for these tasks are shown in Table 1. We report the upper bound given by the LP relaxation, density value, $AIR$ as well as number and sizes of the connected components. 
%Furthermore we give the names of the team members as well as their Citeseer rank. 
 Furthermore, we give the names and the Citeseer ranks of the team members who have rank at most 1000.
Note that \mdalk\; could only be applied to some of the tasks 
and \LPFeas\; failed to find a feasible team in several cases.
 
 As a first task we show the unconstrained solution where we maximize density without any constraints. 
 Note that this problem is optimally solvable in polynomial time and all methods find the optimal solution.
 % which corresponds to a team with high density.
 The second task asks for at least three experts with skill \DB. Here 
 again all methods return the same team, which is indeed optimal since the LP bound agrees with the density of the obtained team.
 
% The difference between the two methods can be seen in Task 3, where we ask for at least 5 experts from skill \Th. Here, \FORTE\ returns 
% a connected team of 8 experts with a better density value, while \mdalk\; produces a disconnected team. 
% Producing a disconnected team is an inherent drawback of \mdalk,
% since the method is based on combining several dense subgraphs extracted successively from a sequence of graphs. 
% The extracted dense subgraphs are in general not connected.
% Moreover, since the greedy algorithm adds self edges when a dense component is removed from the current subgraph, 
% the returned team also often contains single vertex components, as is the case here.

 Next we illustrate the usefulness of the additional modeling freedom of our formulation 
 by giving an example task where obtaining meaningful, connected teams is not possible 
 with the lower bound constraints alone.
%  because of the density based objective.
 Consider a task where we need at least four experts having skill \AI\; (Task 3).
 For this, all methods return the same disconnected team of size seven where only four members have the skill \AI.
 The other three experts possess skills \DB\ and \DM\; and are densely connected among themselves. 
 One can see from the LP bound that this team is again optimal.
  This example illustrates the major drawback of the density based objective which while 
 preferring higher density subgraphs compromises on the connectivity of the solution. 
 Our further experiments revealed that the subgraph corresponding to the skill \AI\
 is less densely connected (relative to the other skills)
 and forming coherent teams in this case is difficult without specifying additional requirements. 
 With the help of subset and distance based constraints supported by \FORTE, we can now impose  
 the team requirements more precisely and obtain meaningful teams. 
% However, \FORTE\; enables us to incorporate subset and distance based constraints 
% which can be used to generate more coherent teams.
%To demonstrate this, we create two new tasks where we specify the leader of the team 
%and ask for experts who are not far away from one another in terms of the underlying co-author graph.
%For these tasks, we choose two \AI\ experts, one from each of the connected components of the  solution of task 4, as the team leaders.
% Andrew Y. Ng, who was one of the \AI\ expert found in the previous task, is the team leader and all experts of the team should be within a distance of two (Task 5).
 In Task 4, we require that Andrew Y. Ng is the team leader and that all experts of the team should be within a distance of two from each other in terms of the underlying co-author graph.
  The result of our method is a densely connected and highly ranked team of 
  size four with a density of 3.89. Note that this is very close to the LP bound of 3.91.
  The feasible solution obtained by \LPFeas\ is worse than our result both in terms of density and $AIR$.
   The greedy method \mdalk\; cannot be applied to this task because of the distance constraint.
     In Task 5 we choose Bernhard Schoelkopf as the team leader while keeping the constraints from the previous task. 
     Out of the three methods, only \FORTE\ can solve this problem. It produces a large disconnected team, many members of which are highly skilled experts from the skill \DM\; and have strong connections among themselves.
     To filter these densely connected members of high expertise, we introduce a budget constraint in Task 6, where  
     we define the cost of the team as the total number of publications of its members.
     %, we add an upper bound constraint on the cost 
     %in Task 6. 
     Again this task can be solved only by \FORTE\ which produces a
    compact team of four well-known \AI\ experts. % (including Kalus-Robert Mueller and Peter L. Bartlett). 
    A slightly better solution is obtained when \FORTE\ is initialized with the infeasible solution of the LP relaxation as shown (only in this task). 
    %which is also  
%    This team is 
    %reported in the table. 
    This is an indication that on more difficult instances of \eqref{eq:teamProbSim}, 
    it pays off to run \FORTE\ with more than one starting point to get the best results. 
    The solution of the LP, possibly infeasible, is a good starting point apart from the unconstrained solution of \eqref{eq:teamProbCont}. 
    
    Tasks 7, 8 and 9 provide some additional %in the table 
    teams found by \FORTE\ for other tasks involving upper and lower bound constraints on different skills.
    As noted in Section \ref{sec:LPrel} the LP bound is loose in the presence of upper bound constraints and 
    this is also the reason why it was not possible to derive a feasible solution from the LP relaxation in these cases. 
    In fact the LP bounds for these tasks remain the same even if the upper bound constraints are dropped from these tasks. 
	
\section{Conclusions}
By incorporating various realistic constraints we have made a step forward towards a
realistic formulation of the team formation problem. 
%Moreover, we introduced a modified
%version of the density of subgraphs as a measure of collaborative compatibility which
%incorporates a ranking of the individual experts, thereby yield highly competent teams. 
%This allows not only to find teams which fulfill
%the skill requirements but on top yields highly competent teams.
%Our method provides not only
%better quantitative solutions (higher density) than the greedy method of \cite{GajSar12},
%but also the found teams are qualitatively better. 
Our method finds qualitatively better teams that are more compact and have higher densities than those found by the greedy method \cite{GajSar12}.
Our linear programming relaxation not only allows us to check the 
solution quality but also provides a good starting point for our non-convex method.
%Our linear programming relaxation further allows us 
%to check the solution quality of our method as well the competing greedy method.
However, arguably, a potential downside of a density-based
approach is that it does not guarantee connected components.
% While it is a matter of debate if connected
%teams are really important, 
A further extension of our approach
could aim at incorporating ``connectedness'' or a relaxed version of it as an additional
constraint.

\section*{Acknowledgements}

We gratefully acknowledge support from the Excellence Cluster MMCI at Saarland University
funded by the German Research Foundation (DFG) and the project NOLEPRO 
funded by the European Research Council (ERC).

%\subsection{Influence of additional constraints}
%	Effect of size constraint and age constraint on the solution..  and the distance constraint... 	
%
% The following two commands are all you need in the
% initial runs of your .tex file to
% produce the bibliography for the citations in your paper.
%\bibliographystyle{abbrv}
%\bibliography{literature2}  % sigproc.bib is the name of the Bibliography in this case
% You must have a proper ".bib" file
%  and remember to run:
% latex bibtex latex latex
% to resolve all references
%
% ACM needs 'a single self-contained file'!
%
%APPENDICES are optional
%\balancecolumns
%\appendix
%\newpage
\begin{appendix}
%\section*{Subgradients of $S_1$ and $R_2$}\label{sec:subgradient}
%Note that we do not an explicit analytic expression of the Lovasz extensions $S_1$ and $R_{2}$ in the RatioDCA.
%The RatioDCA requires the computation of 
%We only need to compute 
%subgradients $s_{1} \in \partial S_1(f^l)$ and $r_{2} \in \partial R_2(f^l)$ for each $f^{l}$, $l \geq 0$. 
The subgradient of $S_{1}(f)$ is given by 
$s_{1}(f) = d + d^{S} + \mu_{S} I_{max}(f)$, 
where $I_{max}(f)$ is the indicator function of the largest entry of $f$.
%This can be computed using a property of the 
%%This can be computed using the definition of Lovasz extension and the property of 
%positively one-homogeneous functions
%\[
%	  R_{2}(f^{l}) = \inner{f^{l}, r_{2}(f^{l})} = \sum_{i=1}^{n} \big(r_{2}(f^{l)} \big)_{i} f^{l}_{i} 
%\]
%and the (equivalent) definition of Lovasz extension 
%\[
%	R_{2}(f^{l}) = \sum_{i=1}^{n} \big( \widehat{R}_{2}(C_{i}) -\widehat{R}_{2}(C_{i+1}\big ) f^{l}_{i},
%\]
%where $f^{l}$ is written in ascending order $f^{l}_1\leq f^{l}_2\leq\cdots\leq f^{l}_n$ and $\Ct_i:= \left\{ j \in V | j \geq i\right\}$.
%From the above two equations one can deduce that 
%Recall that the function $R_2$ is given as the Lovasz extension of the set 
%function
%\begin{align*}
%\pen_2(A)  &= \sum_{j=1}^p \min\{l_j, \vol_{M_{j}}(A) \}  
%  	 + \sum_{j=1}^p \min\{k_j, \vol_{M_{j}}(A) \} \\
%  	 & - \sum_{u,v\ in \Vp} D(u,v) \ones_{u \in A} \ones_{v \in A}\\
%  	 \end{align*}
%For the subgradient of $R_2$ we do not need an explicit expression of the Lovasz extension of $\min\{l_j, \vol_{M_{j}}(A)$. 
%This is due to the following property of the Lovasz extension, see e.g. Proposition 2.2 in \cite{Bach11}: Given a set function $R:2^V \rightarrow \R$, a subgradient $r^L(f)$ of its Lovasz extension $R^L$ is given as
%\[ \big(r^L(f)\big)_{i} = \left\{\begin{array}{ll}
%   R(A_{i}) -R(A_{i+1}), & i< n \\
%   R(A_{n}), & i= n\end{array}\right.\]
%where again the $f_i$ are sorted in ascending order.
%In our case, we obtain the subgradient
For the subgradient of $R_2$, 
using Prop. 2.2. in \cite{Bach11}, we obtain for the subgradient
$t_{(l_j,M_j)}$ of  the terms of the form $\min\{l_j, \vol_{M_{j}}(A)\}$,
\begin{align*}
\big(t_{(l_j,M_j)}(f)\big)_i \hskip-0.1cm &=  \hskip-0.1cm \left\{\begin{array}{ll} \hskip-0.1cm 0 & \vol_{M_j}(A_{i+1}) >l_j\\
\hskip-0.1cm l_j - \vol_{M_j}(A_{i+1}) & \vol_{M_j}(A_{i}) \geq l_j, \\  &\vol_{M_j}(A_{i+1}) \leq l_j\\
	\hskip-0.1cm M_{ij} & \vol_{M_j}(A_i)<l_j	
		\end{array} \right. \ .
\end{align*}
Defining $D_{uv} := \max\{0,\ \dist(u, v) - d_{0} \}$, an element of the subgradient of the second term of $R_2$ is given as $d_D-p(f)$, where $(d_D)_i=\sum_j D_{ij}$ and  
%Defining $D_{uv} := \max\{0,\ \dist(u, v) - d_{0} \}$, the second part of $\pen_2(A)$ can be rewritten as
%$
% \sum_{u,v\in A} D_{uv} = \sum_{u\in A,v\in \Vp} D_{uv}
% - \sum_{u\in A,v\in \Vp \backslash A} D_{uv}
%$.
%which has the Lovasz extension
%$
%   \vol_{d_D}(f) - 
%   \frac{1}{2} \sum_{i,j\in \Vp} D_{ij}\abs{f_i-f_j}, 
%$
%where $(d_D)_i=\sum_j D_{ij}$. The first term has subgradient $d_D$. 
%% The last term can be interpreted as a cut on the graph induced by $D$. 
%% An element $p(f)$ of the subgradient is given as 
%  An element $p(f)$ of the subgradient of the second term is given as 
%\begin{align*}
$p(f)_i  %\in %& \quad \partial \big(\frac{1}{2}\sum_{i,j=1}^n w_{ij}|f_i - f_j| \big)_i \\
 \in \Big\{ \textstyle{\sum_{j=1}^m} D_{ij} u_{ij} \, |\, u_{ij}=-u_{ji}, u_{ij} \in \mathrm{sign}(f_i-f_j)\Big\},$
%\end{align*}
where $\sgn(x) := +1$, if $x >0$; -1 if $x < 0$; $[-1,1]$, if $x=0$.
In total, we obtain for the subgradient $r_2(f)$ of $R_2(f)$, 
\[
	r_{2}(f) = \gamma \textstyle{\sum_{j=1}^{p}} t_{(l_j,M_j)}(f) + \gamma \textstyle{\sum_{j=1}^{p}} t_{(k_j,M_j)}(f) +\gamma (p(f) -d_D).
\]
\end{appendix}
\medskip
%\newpage
\bibliographystyle{abbrv}
%\bibliography{literature2}  

\end{document}